\newtheorem{theorem}{Theorem}
\newtheorem{lemma}[theorem]{Lemma}
\newtheorem{proposition}[theorem]{Proposition}
\newtheorem{corollary}[theorem]{Corollary}
\newenvironment{proof}[1][Proof]{\noindent\textbf{#1.} }{\ \rule{0.5em}{0.5em}}
\theoremstyle{break}
\newtheorem{remark}[theorem]{Remark}
\newcommand{\blue}[1]{{\color{black}#1}}
\title{Queues with inspection cost: To see or not to see?}
\author{Jake Clarkson\thanks{jclarkson2302@gmail.com, corresponding author} \and Konstantin Avrachenkov\thanks{k.avrachenkov@inria.fr} \and Eitan Altman\thanks{eitan.altman@inria.fr}\\ 
}
\date{Centre Inria d'Université Côte d'Azur, Sophia Antipolis, 06902, France\thanks{The work was supported by the Project of Inria - Nokia Bell Labs
“Distributed Learning and Control for Network Analysis” and accepted by Springer journal Queueing Systems.}}
\begin{document}
\maketitle

\begin{abstract}
Consider an M/M/1-type queue where joining attains a known reward, but a known waiting cost is paid
per time unit spent queueing. In the 1960s, Naor showed that any arrival optimally joins
the queue if its length is less than a known threshold. Yet acquiring knowledge of the queue length often brings an additional cost, e.g., website loading time or data roaming charge.
Therefore, our model presents any arrival with three options: join blindly, balk blindly, or pay a known inspection cost to make the optimal joining decision by comparing the queue length to Naor's threshold. 
In a recent paper, Hassin and Roet-Green prove that a unique Nash equilibrium always exists and classify regions where the equilibrium probabilities are non-zero. We complement these findings with new closed-form expressions for the equilibrium probabilities in the majority of cases. Further, Hassin and Roet-Green show that minimizing inspection cost maximises social welfare. Envisaging a queue operator choosing where to invest, we compare the effects of lowering inspection cost and increasing the queue-joining reward on social welfare. We prove that the former dominates and that the latter can even have a detrimental effect on social welfare.
\end{abstract}

\section{Introduction and Model} \label{sec:intro}
In an increasingly digitalised and connected world, information is quickly becoming one of the most valuable assets. It is crucial, therefore, to be able to focus on helpful information and to use it efficiently. In this paper, following the work in \cite{Hassin2017}, we bring this question to an M/M/1 strategic queueing model, investigating how much customers value knowing the length of a queue that they can choose to join.

This issue is now widely relevant, as healthcare providers, voting locations, airport security and even amusement parks have started to publish live queue information. In \cite{Hassin2017}, several North American examples are provided; we notice that several UK organisations are following suit. For example, the British National Health Service (NHS) is currently developing an app which will provide wait times in emergency departments across the UK \cite{NHSquicker}. The theme park Alton Towers already has an app available where live queue times can be accessed \cite{AltonTowers}. In addition, both Edinburgh \cite{EdinburghAirport} and Manchester \cite{ManchesterAirport} airports now display the current security wait times on their homepages. 

With the abundance of smartphones, queue-length information is now accessible from anywhere for the majority of consumers. Despite being usually free of charge, its access often involves some cost or inconvenience to the customer. As noted by \cite{Hassin2017}, there may be a time cost whilst searching and waiting for a website or app to load. The consumer may also be required to divulge personal information before access, a process which is time-consuming, invades personal privacy, and may result in unwanted spam afterwards. In this paper, we additionally note that, if the consumer is away from home, accessing queue information may deplete battery, which could prevent smartphone use later in the day. Further, there may be a charge to use mobile data, particularly relevant if the customer is abroad where data roaming charges can be high. Considering customers as individuals who act in their own interest, we ask how costly must the access of queue-length information be for it to be no longer worth acquiring. 

To do this, we consider the extension made in \cite{Hassin2017} to the classic, continuous-time, single-server queueing model of Naor in \cite{Naor1969}. In Naor's model, customers arrive according to a Poisson process with rate $\lambda$, with each arrival deciding whether or not to join the queue. If they do not join, they leave the system; if they do join, they receive a known reward $R>0$ and incur a known waiting cost $c_w>0$ per unit time until they have been served. Only the customer at the front of the queue is served, with service times following an exponential distribution with mean $\mu$.

The aim of each customer is to minimise their own expected cost minus reward. The optimal joining strategy was derived in \cite{Naor1969} using the following logic. If the customer knows the length $q$ of the queue when they arrive (where $q$ includes any customer in service), then they can calculate the expected utility from joining to be
$$R-\frac{c_w(q+1)}{\mu},$$
optimally joining if and only if this value is positive; in other words, if joining is preferable to not joining. It follows that any customer will optimally join the queue if and only if the queue length is strictly less than
\begin{align*}
n_e &\equiv \max\left\{q: R- \frac{c_w q}{\mu}> 0\right\} \\ &= \max\left\{q: q < \frac{R\mu}{c_w}\right\},
\end{align*}
regardless of the joining strategies of other customers.

The novel extension of \cite{Hassin2017} allows a customer to learn the queue length on arrival if they pay a known inspection cost $C_I>0$.
If the customer does inspect the queue, afterwards they will enact Naor's optimal policy, joining if and only if the queue length is strictly less than $n_e$. If the customer chooses not to inspect, they can blindly either join or balk. In other words, there are three pure strategies: inspect, join and balk.

This paper makes several significant advancements to the ground-breaking analysis of \cite{Hassin2017} \blue{for the case where $\lambda < \mu$---the most common both in the queueing literature and the real world (since most queue managers ensure adequate resource to cover long-term demand).} Given the increasing availability of queue-length information in many real-life scenarios, this model is highly relevant, and hence any progress is of great practical value. We outline our contribution below.

A key result of \cite{Hassin2017} is to prove that a unique equilibrium always exists and to classify which of the three equilibrium probabilities (corresponding to the three pure strategies) are non-zero for different parameter values. 
Our work finds closed-form expressions for the equilibrium probabilities when at least one equilibrium probability is 0, which covers the vast majority of cases (all except the brown region in Figure \ref{fig:all}). 
Further, we explore new expressions and structural properties for the utility functions, which both greatly simplify the scenario where the reward $R$ is low and lead to a simpler alternative proof of a key step in the uniqueness proof of \cite{Hassin2017}.

\blue{In \cite{Hassin2017}, results are presented in terms of normalised results and costs.} \blue{In this paper, we use Figure \ref{fig:all} to} show visually the unique equilibrium as the \blue{unnormalised} inspection cost $C_I$ and reward $R$ vary, intuitively explaining several interesting patterns in behaviour. Such a depiction allows the service provider to both understand and visualise how marginal changes to \blue{the raw values of} $C_I$ and $R$ affect customer behaviour---a useful tool since $C_I$ and $R$ can be changed by altering business strategy. 
The reward $R$ represents the quality of service or product, which can be increased by investment and decreased under budget cuts. The inspection cost $C_I$ can be controlled by how easily queue-length information is available. As discussed in \cite{Hassin2017}, accessibility depends on the location of the information on the website, the design and loading time of the website, whether a specific smartphone app is dedicated to the queue length, and whether a registration process requiring personal details needs to be completed. To facilitate access even further, electronic signs could be installed around a hospital or amusement park with real-time updates to queue lengths.

Another key contribution of \cite{Hassin2017} is to analyse how marginal changes to $C_I$ affect several objectives. When revenue, defined as system throughput, is to be maximised, it is shown that the optimal inspection cost $C_I$ depends on the congestion, $\lambda/\mu$. When congestion is low, such as in a private hospital, $C_I$ should be set high, and when congestion is high, such as in a public hospital, $C_I$ should be set low. On the other hand, when social welfare, defined as the expected reward minus cost of an arrival at equilibrium, is to be maximised, it is shown, as expected, that $C_I$ should be taken as low as possible. 

We approach social welfare from a more practical angle. Any service provider has a limited budget. The amount of budget allocated to service quality will affect the reward $R$. 
In addition, the inspection cost $C_I$ is also affected by the budget's allocation, since the aforementioned factors which may decrease $C_I$ all come at a financial cost. Investment is needed to improve website design and loading times, or to build an app or signage specific to queue information.
Having the queue length on the front page of a website takes the place of other information or advertisement, creating an opportunity cost. Finally, potentially valuable customer data is forgone by removing or reducing any registration process. Therefore, we provide an analysis of how social welfare changes under marginal changes to $C_I$ and $R$ to inform a service provider how best to allocate their budget. We partition the parameter space into two: a region where the rate of social welfare improvement is greater decreasing $C_I$ than increasing $R$, and a region where the reverse holds. 
We also show that, due to an increase in queue lengths, increasing $R$ can sometimes paradoxically decrease social welfare.
 
Our analysis is clearly very relevant to a public service provider, where maximising social welfare is the main objective. For a more profit-driven private service, our work still applies, since a good customer experience is key to returning customers and long-term profit. 

To conclude the introductory section, we describe a few more relevant references in addition to \cite{Hassin2017} and \cite{Naor1969}. 
\blue{The work \cite{Hassin2020}, a follow up to \cite{Hassin2017}, analyzes how providing queue length information 
before the trip to the queue location affects customer behavior and system performance. The work \cite{Hassin1986} compares
the performance of observable and unobservable queues as congestion levels vary.}
Many strategic queueing models have been described and studied in the books \cite{Hassin2016book, HassinHaviv2003book,Stidham2009book}; \blue{see also the survey \cite{Ibrahim2018}, which focuses on the effect of information in queueing.} The amount of queue-length information available is crucial in the analysis of the strategic queueing models (see, e.g., \cite{Economou2021, Hassin2016book} for many references on the effect of information on strategic queueing). In some strategic queueing models, partial information can be provided to the users (see e.g., \cite{Altman2016} \blue{and \cite{Lingenbrink2019}}), or the users can be divided into groups with various levels of access to the information (e.g. \cite{Hu2018}). \blue{The work \cite{Burnetas2017} examines how customers behave strategically in a single-server queue where they cannot see the queue length upon arrival, but receive periodic updates about their position.} The recent paper \cite{Economou2022} from the Queueing System anniversary volume considers the question ``How much information should be given to the strategic customers of a queueing system?''. One way to calibrate the amount of information provided to the customers is to put a price upon it.

The layout of the paper is as follows. In Section \ref{sec:utility} we derive expressions for the utility functions for which we prove several key properties in Section \ref{sec:props}. In Section \ref{sec:find_eq}, we explore the structure of the unique equilibrium of \cite{Hassin2017} as $R$ and $C_I$ vary, adding expressions for the equilibrium probabilities and simplifying the case for low $R$. Section \ref{sec:SW} contains our study of social welfare as $R$ and $C_I$ change, to help a service provider best allocate an investment. Finally, Section \ref{sec:conclusion} concludes.

\section{Utility Functions} \label{sec:utility}
Recall that we consider a continuous-time, single-server queue with service rate $\mu$, to which customers arrive with rate $\lambda$. \blue{We assume that $\lambda < \mu$.} The queue length is not immediately observable to an arrival, who has three options:
\begin{enumerate}
\item Pay cost $C_I$ to inspect the queue length, then join the queue if and only if the queue length is strictly less than the optimal threshold $n_e$.
\item Join the queue without inspecting. 
\item Balk from the queue without inspecting. 
\end{enumerate} 
On joining the queue, the arrival receives a known reward $R>0$, but pays a known cost $c_w$ for each time unit until their service is completed. It was hence deduced in Section \ref{sec:intro} that the optimal threshold $n_e$ satisfies
\begin{equation} \label{eqn:ne_def}
n_e \equiv \max\left\{q: q < \frac{R\mu}{c_w}\right\}.
\end{equation}
As in \cite{Hassin2017}, to avoid trivialities, we will assume that $R \mu > c_w$ so that $n_e \geq 1$. 

\blue{The utility from balking is normalized to 0.} As in \cite{Hassin2017}, we write $U_I$ and $U_J$ for the expected utilities of inspecting and joining, respectively, but here express them as functions of the decision probabilities of all other customers, $P_I$, $P_J$ and $P_B$. Since $P_I+P_J+P_B=1$, it suffices to write $U_I(P_I,P_J)$ and $U_J(P_I,P_J)$. We will find alternative expressions for $U_I(P_I,P_J)$ and $U_J(P_I,P_J)$ to those of \cite{Hassin2017}, leading to several simplifications in special cases which will aid our analysis in later sections.

Write $Q$ for the random variable representing the queue length. The conditional expected utility satisfies
$$E[U|Q=q]=R-\frac{c_w(q+1)}{\mu};$$
from which it follows that we have 
\begin{align}
U_I(P_I,P_J)&=P(Q<n_e\mid P_I,P_J)\left(R-\frac{c_w\left(E\left[Q \mid Q<n_e,P_I,P_J\right]+1\right)}{\mu} \right) -C_I,  \label{eqn:VI}\\
U_J(P_I,P_J)&=R-\frac{c_w\left(E\left[Q\mid P_I,P_J\right]+1\right)}{\mu}. \label{eqn:VJ}
\end{align}

To obtain explicit expressions for \eqref{eqn:VI} and \eqref{eqn:VJ}, we first must derive the steady state distribution for a pair of decision probabilities $P_I$ and $P_J$. To facilitate the derivation, we write\footnote{Note that $\rho_L=\xi$ and $\rho_U=1-\eta$ in the notation of \cite{Hassin2017}.} $\rho_L=(P_I+P_J)\rho$ and $\rho_U=P_J\rho$, where \blue{$\rho=\lambda/\mu < 1$}. Therefore, $\rho_L$ is the traffic intensity when the queue is below $n_e$ (when only those balking do not join the queue) and $\rho_U$ is the traffic intensity when the queue is greater than or equal to $n_e$ (when only those who join without inspecting join the queue). 

Where $\pi_i$ is the steady state probability that the queue \blue{length is $i$}, as in \cite{Altman2016} and \cite{Hassin2017}, we derive
\begin{equation} \label{eqn:steadystate} 
\pi_i =
  \begin{cases}
    \rho_L^i \pi_0, & \text{for } i=0,1,\ldots, n_e-1 \\
    \rho_L^{n_e} \rho_U^{i-n_e} \pi_0, & \text{for } i=n_e,n_e+1,\ldots,
  \end{cases}
\end{equation}
by detailed balance, and since the steady state distribution sums to 1, we calculate
\begin{equation} \label{eqn:pi_0}
(\pi_0)^{-1}=\frac{1-\rho_L^{n_e}}{1-\rho_L}+\frac{\rho_L^{n_e}}{1-\rho_U}.
\end{equation}

To derive formulae for $U_I(P_I,P_J)$ and $U_J(P_I,P_J)$, we calculate the same conditional and unconditional expected queue lengths as \cite{Altman2016}. In particular, for $U_J(P_I,P_J)$, we use
$$E[Q|P_I,P_J]=\sum_{i=0}^\infty i \pi_i$$
in addition to \eqref{eqn:VJ}, \eqref{eqn:steadystate} and \eqref{eqn:pi_0} to obtain
\begin{equation} \label{eqn:VJ_form_gen}
U_J(P_I,P_J)= R-\frac{c_w(1-\rho_L+\rho_\Delta a+\rho_\Delta^2b)}{\mu(1-\rho_U)(1-\rho_U-\rho_\Delta \rho_L^{n_e})},
\end{equation}
\begin{equation} \label{eqn:EQ_gen_supp} 
\text{where} \quad \rho_\Delta=\rho_L-\rho_U, \; \; a=2-(n_e+2)\rho_L^{n_e}+n_e\rho_L^{n_e+1}, \; \; b=\frac{1-(n_e+1)\rho_L^{n_e}+n_e\rho_L^{n_e+1}}{1-\rho_L}. 
\end{equation}
To calculate $U_I(p_I,p_J)$, we use
$$P[Q<n_e|P_I,P_J]=\sum_{i=0}^{n_e-1} \pi_i \quad \text{and} \quad E[Q|Q<n_e,P_I,P_J]=\frac{1}{P[Q<n_e|P_I,P_J]}\sum_{i=0}^{n_e-1} i \pi_i,$$
in addition to \eqref{eqn:VI}, \eqref{eqn:steadystate} and \eqref{eqn:pi_0}, to obtain
\begin{equation} \label{eqn:VI_form_gen}
U_I(P_I,P_J)=\frac{(1-\rho_U)}{(1-\rho_U-\rho_\Delta \rho_L^{n_e})}\left[R(1-\rho_L^{n_e})-\frac{c_w(1+n_e\rho_L^{n_e+1}-(n_e+1)\rho_L^{n_e})}{\mu(1-\rho_L)}\right]-C_I.
\end{equation}

We next look at the special cases where no or all customers inspect the queue.

\paragraph{No Customers Inspect the Queue}
In this case $P_I=0$, which leads to an \blue{unobservable} M/M/1 queue with arrival rate $\lambda P_J$ for all states. Therefore, $\rho_L=\rho_U=\rho P_J$ and hence $\rho_\Delta=0$, so \eqref{eqn:VJ_form_gen} simplifies to give
\begin{equation} \label{eqn:VJ_0,p_J}
U_J(0,P_J)=R-\frac{c_w}{\mu(1-\rho P_J)}=R-\frac{c_w}{\mu-\lambda P_J},
\end{equation}
which is consistent with the known expected queue length for an \blue{unobservable} M/M/1 queue.
Additionally, \eqref{eqn:VI_form_gen} simplifies to
\begin{equation} \label{eqn:VI_0,p_J}
U_I(0,P_J)=R(1-(\rho P_J)^{n_e})-\frac{c_w(1+n_e(\rho P_J)^{n_e+1}-(n_e+1)(\rho P_J)^{n_e})}{\mu-\lambda P_J}-C_I.
\end{equation}

\paragraph{All Customers Inspect the Queue}
In this case $P_I=1$, which leads to an \blue{observable} M/M/1 queue where all arriving customers join the queue if and only if the queue is strictly less than length $n_e$; this policy was studied in \cite{Naor1969}. We have $\rho_U=0$ and $\rho_L=\rho_\Delta=\rho$, so \eqref{eqn:VJ_form_gen} simplifies to give
\begin{equation} \label{eqn:VJ_1,*}
U_J(1,0)=R-\frac{c_w(1-\rho^{n_e+1}-(1-\rho)(n_e+1)\rho^{n_e+1})}{(\mu-\lambda)(1-\rho^{n_e+1})},
\end{equation}
which agrees with the calculations of \cite{Naor1969}.
Additionally, \eqref{eqn:VI_form_gen} simplifies to
\begin{equation} \label{eqn:VI_1,*}
U_I(1,0)=\frac{R(1-\rho^{n_e})}{1-\rho^{n_e+1}}-\frac{c_w(1+n_e\rho^{n_e+1}-(n_e+1)\rho^{n_e})}{(\mu-\lambda)(1-\rho^{n_e+1})}-C_I.
\end{equation}

We finish this section by deriving an expression for $U_{J-I}(P_I,P_J) \equiv U_J(P_I,P_J)-U_I(P_I,P_J)$ independently of \eqref{eqn:VJ_form_gen} and \eqref{eqn:VI_form_gen}. Consider two arrivals to a queue in steady state when all other customers play $(P_I,P_J)$. The first arrival joins the queue blindly, whilst the second inspects. With probability $P(Q<n_e \mid P_I,P_J)$ both arrivals join the queue, with the inspecting arrival incurring an extra cost $C_I$. With probability $P(Q\geq n_e \mid P_I,P_J)$, the inspecting arrival does not join and pays only cost $C_I$, but the blind arrival does join, incurring a net reward of
$$R-\frac{c_w(E[Q|Q\geq n_e,P_I,P_J]+1)}{\mu}.$$
It follows that
\begin{align} \label{eqn:V_J-I_eqn}
U_{J-I}(P_I,P_J)=P(Q\geq n_e|P_I,P_J)\left[R-\frac{c_w(E[Q|Q\geq n_e,P_I,P_J]+1)}{\mu}\right]+C_I.
\end{align}
Conditional on being greater than length $n_e$, we have a standard M/M/1 queue with traffic intensity $\rho_U \equiv P_J \rho$, so the conditional expectation is simply $E[Q|Q\geq n_e,P_I,P_J]=n_e+\rho_U/(1-\rho_U)$. Using the steady state distribution to evaluate $P(Q\geq n_e|P_I,P_J)$, we hence have
\begin{align} \label{eqn:V_J-I_eqn_closed}
U_{J-I}(P_I,P_J)=\frac{\rho_L^{n_e}(1-\rho_L)}{1-\rho_U+(\rho_U-\rho_L)\rho_L^{n_e}}\left[R-\frac{c_w(n_e+1+\blue{\rho_U/(1-\rho_U)})}{\mu}\right]+C_I.
\end{align}
Both \eqref{eqn:V_J-I_eqn} and \eqref{eqn:V_J-I_eqn_closed} will be useful in later sections.

\section{Properties of Utility Functions} \label{sec:props}
In this section, we show several monotonicity properties of the utilities $U_J$, $U_I$ and their difference $U_{J-I}$.
First, we need a lemma, which shows that increasing traffic intensity leads to a stochastically larger queue and an increase in the value of several properties.
\begin{lemma} \label{lem:monoPandE}
Suppose both $\rho_L$ and $\rho_U$ are increased, at least one by a non-zero amount. Then we have the following \blue{effects}:
\begin{enumerate}[label=(\roman*)]
\item \blue{For any fixed $x \in \{0,1,2,\ldots\}$, the value of $P(Q\geq x)$ will strictly increase.} \label{cond:lemP(Q>=x)}
\item $E[Q]$ \blue{will} strictly increase. \label{cond:lemE[Q]}
\item $E[Q\mid Q<n_e]$ \blue{will} increase.\label{cond:lemE[Q|Q<l]}
\item $E[Q\mid Q\geq n_e]$ \blue{will} increase. \label{cond:lemE[Q|Q>=l]}
\end{enumerate}
\end{lemma}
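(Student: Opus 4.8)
The plan is to prove the four parts in the order (i)$\to$(ii)$\to$(iii),(iv), working directly from the explicit stationary distribution \eqref{eqn:steadystate}--\eqref{eqn:pi_0} rather than through a coupling argument, since the explicit formulae also deliver the strictness demanded in (i) and (ii). Throughout I would regard $\rho_L$ and $\rho_U$ as free parameters with $0\le\rho_U\le\rho_L<1$ and show each quantity is monotone in the two of them \emph{separately}; the simultaneous-increase hypothesis then follows by moving one coordinate at a time (increasing $\rho_L$ first), with the strictly increasing coordinate supplying the strict conclusions in (i) and (ii).

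For (i), I would evaluate $P(Q\ge x)$ from \eqref{eqn:steadystate}--\eqref{eqn:pi_0} in the two regimes $1\le x\le n_e$ and $x>n_e$ (the case $x=0$ is the trivial identity $P(Q\ge 0)=1$, so the strict claim is understood for $x\ge 1$). In the first regime it is cleanest to track the odds $P(Q\ge x)/P(Q<x)$, in which the common factor $\pi_0$ cancels to leave $\big(\sum_{i=x}^{n_e-1}\rho_L^i+\rho_L^{n_e}/(1-\rho_U)\big)/\sum_{i=0}^{x-1}\rho_L^i$. Increasing $\rho_U$ raises only the numerator, so the odds strictly increase; increasing $\rho_L$ keeps every numerator power at least $x$ and every denominator power at most $x-1$, so a short pairing argument---the sign of the derivative of the ratio is $\sum_{i<x\le j}(j-i)\rho_L^{i+j-1}>0$---again forces a strict increase. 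In the regime $x>n_e$ the explicit form $P(Q\ge x)=\rho_U^{x-n_e}\big/\big(1+(1-\rho_U)\sum_{k=1}^{n_e}\rho_L^{-k}\big)$ is manifestly strictly increasing in each of $\rho_L$ and $\rho_U$. Combining the coordinates gives the strict increase of $P(Q\ge x)$.

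Part (ii) is then immediate: since $Q$ is a non-negative integer with finite mean (as $\rho_U<1$), $E[Q]=\sum_{x=1}^{\infty}P(Q\ge x)$, and by (i) every summand is non-decreasing while the $x=1$ term $P(Q\ge 1)=1-\pi_0$ strictly increases, because \eqref{eqn:pi_0} shows $\pi_0^{-1}$ strictly increases whenever either parameter does. For (iii) and (iv) the key observation is that conditioning decouples the two parameters. Conditioned on $\{Q\ge n_e\}$, the law of $Q$ is $n_e$ plus a geometric$(\rho_U)$ variable, so $E[Q\mid Q\ge n_e]=n_e+\rho_U/(1-\rho_U)$ as already noted before \eqref{eqn:V_J-I_eqn_closed}; this depends only on $\rho_U$, increases in it, and is constant in $\rho_L$, hence increases weakly. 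Conditioned on $\{Q<n_e\}$, the law is supported on $\{0,\dots,n_e-1\}$ with $P(Q=i\mid Q<n_e)\propto\rho_L^i$, a one-parameter exponential family with natural parameter $\log\rho_L$, so its mean $E[Q\mid Q<n_e]=\big(\sum_{i=0}^{n_e-1}i\rho_L^i\big)/\big(\sum_{i=0}^{n_e-1}\rho_L^i\big)$ has derivative in $\log\rho_L$ equal to the (non-negative) conditional variance and is therefore non-decreasing in $\rho_L$ and constant in $\rho_U$.

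I expect the only genuine obstacle to be the $\rho_L$-monotonicity of the partial-sum ratio in the regime $x\le n_e$; everything else is either a one-line variance/exponential-family argument or direct inspection of an explicit fraction. The pairing argument disposes of it cleanly, provided one keeps the numerator and denominator index ranges disjoint (all numerator powers $\ge x$, all denominator powers $\le x-1$) so that every cross term $j-i$ is strictly positive. The same decoupling that proves (iii) and (iv) also records \emph{why} they are only weak: one may increase solely the parameter on which the relevant conditional mean does not depend, leaving it unchanged.
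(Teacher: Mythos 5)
Your proof is correct, but it takes a genuinely different route from the paper's. The paper's argument is a short stochastic-order one: from \eqref{eqn:steadystate} the ratio $\tilde{\pi}_i/\pi_i$ of new to old steady-state probabilities is non-decreasing and non-constant in $i$, so the new queue length dominates the old in the likelihood ratio order; Theorem 1.C.1 of \cite{shaked2007stochastic} then yields (i) and (ii), and Theorem 1.C.6 (likelihood-ratio dominance is preserved under conditioning on any event) combined with 1.C.1 yields (iii) and (iv). You replace this machinery with self-contained computations: the odds-ratio and pairing argument for (i), the tail-sum identity $E[Q]=\sum_{x\geq 1}P(Q\geq x)$ for (ii), and---the nicest part---the observation that the two conditional laws decouple: above $n_e$ the conditional law is $n_e$ plus a geometric depending only on $\rho_U$ (exactly the identity the paper itself uses just before \eqref{eqn:V_J-I_eqn_closed}), while below $n_e$ it is a truncated geometric depending only on $\rho_L$, whose mean is non-decreasing by the exponential-family variance identity. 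What the paper's route buys is brevity and a conceptual explanation (everything follows from one dominance relation); what yours buys is that strictness in (i)--(ii) comes out of the formulas directly, rather than via the paper's footnoted adaptation of Theorem 1.C.1 to strict dominance, and the decoupling pinpoints exactly why (iii)--(iv) can only be weak. One caveat applies equally to both proofs: the strict claim in (i) fails in degenerate cases---at $x=0$ (which you flag), and for $x>n_e$ when $\rho_U=0$ both before and after the change, where $P(Q\geq x)$ remains $0$, so your assertion that the explicit form for that regime is strictly increasing in $\rho_L$ breaks down at the boundary $\rho_U=0$. Since the paper only ever invokes (i) with $x=n_e$ (in Lemma \ref{lem:mono_all_U} and Corollary \ref{corol:mono_diff}), this shared blemish is harmless.
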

\begin{proof}
Write $\tilde{Q}$ for the random variable representing the queue length and $\tilde{\pi}_i$, $i=0,1,\ldots$, for the steady state probabilities when the traffic intensities are $\tilde{\rho}_L\geq \rho_L $ and $\tilde{\rho}_U\geq \rho_U $. As in the lemma statement, assume that at least one of these inequalities is strict.

By \eqref{eqn:steadystate}, we have
\begin{equation*}
\frac{\tilde{\pi}_i}{\pi_i} =
  \begin{dcases}
    \frac{\tilde{\pi}_0}{\pi_0} \left(\frac{\tilde{\rho}_L}{\rho_L}\right)^i & \text{for } i=0,1,\ldots, n_e-1 \\
    \frac{\tilde{\pi}_0}{\pi_0} \left(\frac{\tilde{\rho}_L}{\rho_L}\right)^{n_e} \left(\frac{\tilde{\rho}_U}{\rho_U}\right)^{i-n_e} & \text{for } i=n_e,n_e+1,\ldots
  \end{dcases}
\end{equation*}
Consider the sequence
\begin{equation} \label{eqn:proof_seq}
\left\{\frac{\tilde{\pi}_i}{\pi_i}: i=0,1,2,\ldots \right\}.
\end{equation}
Since $\rho_L \leq \tilde{\rho}_L$ and $\rho_U \leq \tilde{\rho}_U$, the sequence must be increasing, and, since at least one of the inequalities is strict, it cannot be constant. \blue{Therefore, $\tilde{Q}$ is strictly greater than $Q$ in the likelihood ratio order and hence, by Theorem 1.C.1 of \cite{shaked2007stochastic}, also in the usual stochastic order\footnote{Theorem 1.C.1 shows non-strict dominance in likelihood ratio order implies non-strict dominance in the usual stochastic order. Yet, the proof may be adapted to show a similar result for strict dominance holds. Note that, to prove parts (iii) and (iv) of the lemma, we do not need a strict version of Theorem 1.C.1.}. Parts \ref{cond:lemP(Q>=x)} and \ref{cond:lemE[Q]} immediately follow.

To prove parts \ref{cond:lemE[Q|Q<l]} and \ref{cond:lemE[Q|Q>=l]}, we combine Theorems 1.C.6 and 1.C.1 of \cite{shaked2007stochastic}, which show that if $\tilde{Q}$ is greater than $Q$ in the likelihood ratio order, then $[\tilde{Q} | \tilde{Q} \in A]$ is greater than $[Q | Q \in A]$ in the usual stochastic order for any measurable set $A$.}
\end{proof}

Now we consider the expected utility functions $U_J(P_I,P_J)$ and $U_I(P_I,P_J)$. Either when blindly joining the queue ($U_J$) or inspecting it first ($U_I$), it is logical that expected utility decreases as queue congestion increases, but how do the actions of previous arrivals, which are determined by $P_I$ and $P_J$, affect congestion? Recall that any previous arrival had three options: inspect, join blindly or balk. To lower congestion, we would clearly like to switch either a queue inspector or blind joiner for a balker, since the latter never joins the queue. Further, since a blind queue joiner \emph{always} joins the queue whilst a queue inspector only joins \emph{if} the current queue length is below $n_e$, we would also like to swap a blind joiner for a queue inspector. We combine Lemma \ref{lem:monoPandE} with the formulae in \eqref{eqn:VI} and \eqref{eqn:VJ} to formally state and prove these assertions below.

\begin{lemma} \label{lem:mono_all_U}
The following is true for $U=U_J$ and $U=U_I$. For any strategy pair $(P_I,P_J)$ and for any $x>0$ such that the arguments of $U$ remain a valid strategy pair, we have
\begin{enumerate}[label={(\alph*)}]
\item $U(P_I,P_J) > U(P_I,P_J+x)$ \, \qquad     (balkers switched with blind joiners) \label{lempart:bk-bj}
\item $U(P_I,P_J) > U(P_I+x,P_J)$   \, \qquad   (balkers switched with queue inspectors) \label{lempart:bk-qi}
\item $U(P_I,P_J) > U(P_I-x,P_J+x)$ \; (queue inspectors switched with blind joiners) \label{lempart:qi-bj}
\end{enumerate}

\end{lemma}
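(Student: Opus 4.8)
The plan is to reduce all six assertions (parts \ref{lempart:bk-bj}--\ref{lempart:qi-bj} for each of $U_J$ and $U_I$) to a single application of Lemma \ref{lem:monoPandE}, by observing that every one of the three switching operations weakly increases both traffic intensities $\rho_L=(P_I+P_J)\rho$ and $\rho_U=P_J\rho$, with at least one of the increases strict. First I would record the effect of each operation on the pair $(\rho_L,\rho_U)$: operation \ref{lempart:bk-bj} raises both $\rho_L$ and $\rho_U$ by $x\rho$; operation \ref{lempart:bk-qi} raises $\rho_L$ by $x\rho$ while leaving $\rho_U$ fixed; and operation \ref{lempart:qi-bj} leaves $\rho_L$ fixed while raising $\rho_U$ by $x\rho$. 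In each case the hypothesis of Lemma \ref{lem:monoPandE} is satisfied, so parts \ref{cond:lemP(Q>=x)}--\ref{cond:lemE[Q|Q>=l]} of that lemma are available.

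For $U=U_J$ the argument is then immediate. By \eqref{eqn:VJ}, $U_J$ is a strictly decreasing affine function of $E[Q]$, and Lemma \ref{lem:monoPandE}\ref{cond:lemE[Q]} states that $E[Q]$ strictly increases under each operation; hence $U_J$ strictly decreases, establishing \ref{lempart:bk-bj}--\ref{lempart:qi-bj} for $U_J$.

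For $U=U_I$ I would rewrite \eqref{eqn:VI} as $U_I=p\,g-C_I$, where $p\equiv P(Q<n_e\mid P_I,P_J)$ and $g\equiv R-c_w\bigl(E[Q\mid Q<n_e,P_I,P_J]+1\bigr)/\mu$. The key preliminary step is to show $g>0$: conditioning on $Q<n_e$ forces $E[Q\mid Q<n_e]\le n_e-1$, so $g\ge R-c_w n_e/\mu$, which is strictly positive because the definition \eqref{eqn:ne_def} gives $n_e<R\mu/c_w$. Note also that $p\ge\pi_0>0$ since $\rho_L<1$, so the conditional expectation is well defined. Now apply the lemma: part \ref{cond:lemP(Q>=x)} with $x=n_e$ shows $P(Q\ge n_e)$ strictly increases, so $p=1-P(Q\ge n_e)$ strictly decreases; part \ref{cond:lemE[Q|Q<l]} shows $E[Q\mid Q<n_e]$ weakly increases, so $g$ weakly decreases while remaining positive. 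The product of a strictly decreasing positive quantity and a weakly decreasing positive quantity is strictly decreasing, so $U_I=pg-C_I$ strictly decreases, establishing \ref{lempart:bk-bj}--\ref{lempart:qi-bj} for $U_I$.

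The main obstacle is the $U_I$ case, specifically verifying that the net-reward term $g$ stays strictly positive throughout; without this sign control one could not conclude that the decrease in the probability $p$ of a below-threshold queue genuinely lowers $U_I$. The remainder is bookkeeping: tracking how the two traffic intensities respond to each switch, and combining strict with weak monotonicity to obtain strict monotonicity of the product.
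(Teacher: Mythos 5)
Your proposal is correct and follows essentially the same route as the paper's own proof: reduce all three switching operations to the hypothesis of Lemma \ref{lem:monoPandE} by tracking $(\rho_L,\rho_U)$, conclude for $U_J$ via part \ref{cond:lemE[Q]} and \eqref{eqn:VJ}, and for $U_I$ write \eqref{eqn:VI} as a product of $P(Q<n_e\mid P_I,P_J)$ and the net-reward term, whose strict positivity (the paper's inequality \eqref{eqn:UI_term_strpos}) is exactly the sign control you identify as the key step. Your additional remarks (taking $x=n_e$ in part \ref{cond:lemP(Q>=x)}, noting $p\geq\pi_0>0$) only make explicit what the paper leaves implicit.
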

\begin{proof}
We begin by showing that Lemma \ref{lem:monoPandE} applies in \ref{lempart:bk-bj}, \ref{lempart:bk-qi} and \ref{lempart:qi-bj}. Recall that the traffic intensities below and above the threshold are given by $\rho_L\equiv (P_I+P_J)\rho$ and $\rho_U \equiv P_J \rho$. In \ref{lempart:bk-bj}, $P_J$ is increased and $P_I$ fixed, so both $\rho_L$ and $\rho_U$ increase and Lemma \ref{lem:monoPandE} applies. In \ref{lempart:bk-qi}, $P_I$ is increased and $P_J$ fixed. Therefore, $\rho_L$ increases and $\rho_U$ stays the same, which again satisfies the conditions of Lemma \ref{lem:monoPandE}. Finally, in \ref{lempart:qi-bj} the sum $P_I+P_J$ is fixed whilst $P_J$ increases. This leads to a fixed $\rho_L$ but an increase in $\rho_U$, which again permits Lemma \ref{lem:monoPandE}.

To complete the proof, we show that whenever Lemma \ref{lem:monoPandE} applies both $U_I$ and $U_J$ strictly decrease. For $U_J$, this is immediate from \ref{cond:lemE[Q]} of Lemma \ref{lem:monoPandE} and inspection of \eqref{eqn:VJ}. For $U_I$, examine the formula in \eqref{eqn:VI}. Note that we have
\begin{equation} \label{eqn:UI_term_strpos}
R-\frac{c_w\left(E\left[Q \mid Q<n_e,P_I,P_J\right]+1\right)}{\mu}\geq R-\frac{c_w n_e}{\mu}>0,
\end{equation}
with the last inequality by the definition of $n_e$. Therefore, the left-hand term in \eqref{eqn:VI} is \blue{the product of two positive-valued expressions. When Lemma \ref{lem:monoPandE} applies, the first expression, $P(Q<n_e|P_I,P_J)$, strictly decreases by \ref{cond:lemP(Q>=x)}, and the second expression, the left-hand term in \eqref{eqn:UI_term_strpos}, decreases by \ref{cond:lemE[Q|Q<l]} (which shows that $E[Q|Q<n_e,P_I,P_J]$ increases).
The product of the two expressions therefore strictly decreases when Lemma \ref{lem:monoPandE} applies.}
Since the right-hand term in \eqref{eqn:VI}, namely $-C_I$, is fixed, the result for $U_I$ follows. 
\end{proof}

Lemma \ref{lem:mono_all_U} shows that certain changes in the strategies of previous arrivals which increase queue congestion lead to a decrease in the expected utility both from joining blindly and inspecting. In other words, as queue congestion increases, the value in balking over both inspecting or blindly joining increases. Now we ask: how does the value in blindly joining over inspecting behave as queue congestion increases? One would expect it to decrease, since blindly joining commits to joining a queue of any length whilst inspecting permits balking if the queue length is seen to be too long. We may again use Lemma \ref{lem:monoPandE} to formalise this result below.

\begin{corollary} \label{corol:mono_diff}
The three properties \ref{lempart:bk-bj}, \ref{lempart:bk-qi} and \ref{lempart:qi-bj} in Lemma \ref{lem:mono_all_U} all hold for $U=U_{J-I}$ where $U_{J-I}(P_I,P_J) \equiv U_J(P_I,P_J)-U_I(P_I,P_J)$.
\end{corollary}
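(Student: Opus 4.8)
The plan is to argue directly from the expression \eqref{eqn:V_J-I_eqn} for $U_{J-I}$, which already isolates the sign structure; working from $U_J$ and $U_I$ separately would be harder, since Lemma \ref{lem:mono_all_U} shows both of them decrease under the switches in question. Write
$$U_{J-I}(P_I,P_J)=C_I+P(Q\geq n_e\mid P_I,P_J)\,g(P_I,P_J),\qquad g(P_I,P_J)\equiv R-\frac{c_w\left(E[Q\mid Q\geq n_e,P_I,P_J]+1\right)}{\mu}.$$
The first step is to pin down the sign of the factor $g$. Because $E[Q\mid Q\geq n_e]\geq n_e$ we have $g\leq R-c_w(n_e+1)/\mu$, and the definition \eqref{eqn:ne_def} of $n_e$ gives $R\mu/c_w\leq n_e+1$, so $g\leq 0$. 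Hence $U_{J-I}-C_I$ is a product of the positive factor $P(Q\geq n_e)$ and the non-positive factor $g$.

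For the monotonicity itself I would reuse, verbatim, the congestion bookkeeping already carried out in the proof of Lemma \ref{lem:mono_all_U}: in each of \ref{lempart:bk-bj}, \ref{lempart:bk-qi} and \ref{lempart:qi-bj} the hypotheses of Lemma \ref{lem:monoPandE} hold. Consequently $P(Q\geq n_e)$ strictly increases by part \ref{cond:lemP(Q>=x)}, while $g$ decreases weakly by part \ref{cond:lemE[Q|Q>=l]}; indeed $g$ depends on the strategies only through $\rho_U$, since $E[Q\mid Q\geq n_e]=n_e+\rho_U/(1-\rho_U)$. Writing $f,g$ for the pre-switch values and $f',g'$ for the post-switch values, the identity $fg-f'g'=g(f-f')+f'(g-g')$ has both summands non-negative when $f'>f>0$ and $g'\leq g\leq 0$, so each switch weakly decreases $U_{J-I}$, giving \ref{lempart:bk-bj}--\ref{lempart:qi-bj} in non-strict form.

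The step I expect to demand the most care is upgrading to the strict inequalities claimed. The same identity shows $fg-f'g'>0$ exactly when either $g>g'$ or $g<0$. For \ref{lempart:bk-bj} and \ref{lempart:qi-bj}, $\rho_U=P_J\rho$ strictly increases, so $g$ strictly decreases and strictness is automatic (even if $g$ begins at $0$, since then $g'<0$). For \ref{lempart:bk-qi}, by contrast, $\rho_U$ is held fixed, so $g$ is unchanged and strictness must rest on $g<0$; this holds in all but the degenerate boundary case in which $\rho_U=0$ together with $R\mu/c_w=n_e+1$, i.e.\ the Naor tie where an arrival to a queue of length $n_e$ is exactly indifferent. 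I would dispose of that single case under the usual genericity assumption that $R\mu/c_w$ is not an integer (equivalently $g<0$), or by a direct check, and otherwise the product argument closes all three parts.
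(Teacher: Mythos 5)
Your proof is correct and follows essentially the same route as the paper's: both work from the decomposition \eqref{eqn:V_J-I_eqn}, reuse the bookkeeping from the proof of Lemma \ref{lem:mono_all_U} to see that Lemma \ref{lem:monoPandE} applies under each of the three switches, and finish with a sign argument on the product $P(Q\geq n_e\mid P_I,P_J)\cdot g(P_I,P_J)$. The one genuine difference is your treatment of strictness, and there you are more careful than the paper. The paper's inequality \eqref{eqn:UJ-I_term_strpos} asserts $R-c_w(n_e+1)/\mu<0$ strictly ``by the definition of $n_e$'', but \eqref{eqn:ne_def} only yields $R\mu/c_w\leq n_e+1$, with equality exactly when $R\mu/c_w$ is an integer. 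Your case analysis pins down the consequence: in parts \ref{lempart:bk-bj} and \ref{lempart:qi-bj} strictness survives because $\rho_U$ strictly increases (so $g$ strictly decreases), whereas in part \ref{lempart:bk-qi} with $P_J=0$ and $R\mu/c_w=n_e+1$ one has $g\equiv 0$, so $U_{J-I}$ is identically equal to $C_I$ along the switch and the claimed strict inequality genuinely fails. This also means your fallback of ``a direct check'' cannot close that case---the check refutes it---so the genericity assumption that $R\mu/c_w$ is not an integer (or weakening part \ref{lempart:bk-qi} to a non-strict inequality at that boundary) is the only fix; the paper's proof, and at that single boundary point the statement itself, silently overlooks this. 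One cosmetic remark: your use of the explicit formula $E[Q\mid Q\geq n_e]=n_e+\rho_U/(1-\rho_U)$ in place of part \ref{cond:lemE[Q|Q>=l]} of Lemma \ref{lem:monoPandE} is a harmless substitution (the paper in fact miscites part \ref{cond:lemE[Q|Q<l]} at the corresponding step); otherwise your argument matches the paper's line by line.
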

\begin{proof}
Recall that the proof of Lemma \ref{lem:mono_all_U} showed that Lemma \ref{lem:monoPandE} applies under \ref{lempart:bk-bj}, \ref{lempart:bk-qi} and \ref{lempart:qi-bj}. We will complete the proof by showing that the expression in \eqref{eqn:V_J-I_eqn} strictly decreases whenever Lemma \ref{lem:monoPandE} applies. The argument is similar to the $U=U_I$ case in the proof of Lemma \ref{lem:mono_all_U}. Note that we have
\begin{equation} \label{eqn:UJ-I_term_strpos}
R-\frac{c_w\left(E\left[Q \mid Q \geq n_e,P_I,P_J\right]+1\right)}{\mu}\leq R-\frac{c_w (n_e+1)}{\mu}<0,
\end{equation}
with the last inequality by the definition of $n_e$. Therefore, the left-hand term in \eqref{eqn:V_J-I_eqn} is \blue{the product of one positive and one negative expression. When Lemma \ref{lem:monoPandE} applies, the positive expression, $P(Q\geq n_e|P_I,P_J)$, strictly increases by \ref{cond:lemP(Q>=x)}, and the negative expression, the left-hand term in \eqref{eqn:UJ-I_term_strpos}, decreases by \ref{cond:lemE[Q|Q<l]} (which shows that $E[Q|Q\geq n_e,P_I,P_J]$ increases).
It follows that the product of the two expressions strictly decreases when Lemma \ref{lem:monoPandE} applies.}
Since the right-hand term in \eqref{eqn:V_J-I_eqn}, $C_I$, is fixed, the result follows. 
\end{proof}

\begin{remark} \label{remark:compare_Hassin_mono}
Corollary \ref{corol:mono_diff} is stated geometrically in \cite{Hassin2017} as \emph{Lemma 2}, a result which is a key step in the proof of their \emph{Theorem 1}, that a unique equilibrium exists. We believe that our Corollary \ref{corol:mono_diff} offers a simpler and more intuitive proof of this result than that in \cite{Hassin2017}.
\end{remark}

\section{Equilibrium Structure as $R$ and $C_I$ Vary} \label{sec:find_eq}
Recall that a mixed strategy for an arrival is determined by a set of non-negative probabilities $P_I,P_J,P_B$ satisfying $P_I+P_J+P_B=1$. Therefore, we may denote a mixed strategy by the 2-tuple $(P_I,P_J)$.
As in \cite{Hassin2017}, we consider (Nash) equilibria defined as follows. A mixed strategy $(P_I,P_J)$ is an \emph{equilibrium} if and only if $(P_I,P_J)$ is a best response when all other customers play $(P_I,P_J)$. The authors of \cite{Hassin2017} prove in their \emph{Theorem 1} that a unique equilibrium of the type described above always exists. In this section, our aim is to adapt and extend the results of \cite{Hassin2017} \blue{for the case $\rho < 1$} to depict this unique equilibrium as $R$ and $C_I$ vary and explain the resulting patterns. We will also derive closed-form expressions for the equilibrium probabilities in the cases where at least one of $P_I, P_J$ and $P_B$ is 0, which, as will be shown in Figure \ref{fig:all}, cover the majority of cases.

\blue{For the case $\rho< 1$,} the authors of \cite{Hassin2017} define three scenarios in their \emph{Proposition 1}, which are determined by the value of $\nu\equiv R\mu/c_w$. Each scenario sees different behaviour in the equilibrium probabilities as $C_I$ increases. These scenarios will also be key to our analysis, so we rewrite them in terms of $R$ below.
\begin{align*}
\text{Scenario 1:} &\quad R>\frac{c_w}{\mu-\lambda}\\
\text{Scenario 2:} &\quad c_w\left(\frac{2}{\mu}-\frac{1}{\mu+\lambda}\right)<R\leq \frac{c_w}{\mu-\lambda} \\ 
\text{Scenario 3:} &\quad \frac{c_w}{\mu} < R \leq c_w\left(\frac{2}{\mu}-\frac{1}{\mu+\lambda}\right)
\end{align*}

We begin our analysis with two key properties of scenarios 1 and 3, respectively.
\begin{lemma} \label{lem:Scen1_prop}
We have $U_J(P_I,P_J)> 0$ for every $P_I,P_J$ if and only if we are in Scenario 1.  
\end{lemma}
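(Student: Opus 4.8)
The plan is to recognize that the condition ``$U_J(P_I,P_J)>0$ for every $(P_I,P_J)$'' is equivalent to positivity of the minimum of $U_J$ over the strategy simplex $\{(P_I,P_J): P_I,P_J\geq 0,\ P_I+P_J\leq 1\}$, and then to locate that minimum explicitly. Since $U_J$ in \eqref{eqn:VJ} is decreasing in congestion, the worst case for a blind joiner is when the queue is as long as possible, which happens when every other customer joins blindly. I therefore expect the minimizer to be the corner $(P_I,P_J)=(0,1)$.

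First I would make the reduction to this corner rigorous using Lemma \ref{lem:mono_all_U}. Starting from an arbitrary $(P_I,P_J)$, part \ref{lempart:qi-bj} (with $x=P_I$) gives $U_J(P_I,P_J)\geq U_J(0,P_I+P_J)$, converting all inspectors into blind joiners; then part \ref{lempart:bk-bj} (with $x=1-P_I-P_J$) gives $U_J(0,P_I+P_J)\geq U_J(0,1)$, converting all balkers into blind joiners. Each inequality is strict unless the corresponding switch is vacuous, so $U_J(P_I,P_J)\geq U_J(0,1)$ with equality if and only if $(P_I,P_J)=(0,1)$. Equivalently, one may read this off directly from Lemma \ref{lem:monoPandE}\ref{cond:lemE[Q]}: the expected length $E[Q\mid P_I,P_J]$ is maximized when both $\rho_L=(P_I+P_J)\rho$ and $\rho_U=P_J\rho$ are maximized, i.e. at $P_J=1$, $P_I=0$. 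Hence $\min_{(P_I,P_J)}U_J=U_J(0,1)$.

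Next I would evaluate this minimum. Setting $P_J=1$ in \eqref{eqn:VJ_0,p_J} yields $U_J(0,1)=R-c_w/(\mu-\lambda)$. The lemma then follows by a direct comparison: if $R>c_w/(\mu-\lambda)$ (Scenario 1), then $U_J(0,1)>0$ and, since $(0,1)$ is the minimizer, $U_J(P_I,P_J)>0$ for every $(P_I,P_J)$; conversely, if $R\leq c_w/(\mu-\lambda)$, then $U_J(0,1)\leq 0$, so the strict positivity ``$U_J>0$ for every $(P_I,P_J)$'' already fails at the single point $(0,1)$.

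There is no deep obstacle here; the only points demanding care are the strictness bookkeeping in the ``if and only if'' and the treatment of the boundary case $R=c_w/(\mu-\lambda)$. At this boundary $U_J(0,1)=0$, which is not strictly positive, so the ``for every'' statement correctly fails and Scenario 1 must be the strict inequality $R>c_w/(\mu-\lambda)$ rather than $R\geq c_w/(\mu-\lambda)$. I would therefore invoke the reduction via Lemma \ref{lem:mono_all_U} with strict inequalities away from the corner and handle the equality case separately, so that the equivalence with Scenario 1 is exact.
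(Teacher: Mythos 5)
Your proposal is correct and follows essentially the same approach as the paper: both reduce to the corner $(0,1)$ via two applications of Lemma \ref{lem:mono_all_U}, evaluate $U_J(0,1)=R-c_w/(\mu-\lambda)$ from \eqref{eqn:VJ_0,p_J}, and compare with the defining inequality of Scenario 1. The only (immaterial) difference is the intermediate point: you pass through $(0,P_I+P_J)$ using parts \ref{lempart:qi-bj} and \ref{lempart:bk-bj}, while the paper passes through $(1-P_J,P_J)$ using parts \ref{lempart:bk-qi} and \ref{lempart:qi-bj}.
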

\begin{proof}
For any $P_I,P_J \in [0,1]$, we have
\begin{equation} \label{eqn:proof_2<0_lrg_p}
U_J(P_I,P_J)\geq U_J(1-P_J,P_J) \geq U_J(0,1)=R-\frac{c_w}{\mu-\lambda}.
\end{equation}
\blue{The first inequality holds by taking $x=1-P_J-P_I$ in part \ref{lempart:bk-qi} of Lemma \ref{lem:mono_all_U}. The second inequality holds by taking $x=1-P_J$ in part \ref{lempart:qi-bj} of Lemma \ref{lem:mono_all_U}. The final equality, which links the terms in \eqref{eqn:proof_2<0_lrg_p} to the definition of Scenario 1, holds by \eqref{eqn:VJ_0,p_J}. We now use \eqref{eqn:proof_2<0_lrg_p} to prove the lemma.} By its definition, if we are in scenario 1 then all terms in \eqref{eqn:proof_2<0_lrg_p} are positive, proving the backwards implication. If we are not in scenario 1, then $U_J(0,1) \leq 0$, proving the (contrapositive of the) forwards implication.
\end{proof}

Lemma \ref{lem:Scen1_prop} shows that no matter the strategies of other customers, if a new arrival chooses not to inspect the queue, it is always better to join than to balk in scenario 1. Any equilibrium in scenario 1 therefore has $P_B=0$.

\begin{proposition} \label{lem:Scen3_prop}
We have $U_J(1,0)\leq 0$ if and only if we are in Scenario 3. Further, both can only occur if $n_e=1$. 
\end{proposition}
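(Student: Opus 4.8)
The plan is to split on the value of $n_e$, treating $n_e=1$ by an exact computation and $n_e\geq 2$ by a lower bound that kills both sides of the equivalence at once. I would first record the cheap structural fact that Scenario 3 forces $n_e=1$: its defining upper bound obeys $c_w(2/\mu-1/(\mu+\lambda))<2c_w/\mu$, so Scenario 3 gives $c_w/\mu<R<2c_w/\mu$, i.e. $1<R\mu/c_w<2$, whence $n_e=1$ by \eqref{eqn:ne_def}. This observation handles the Scenario 3 half of the ``both can only occur if $n_e=1$'' clause outright, and it is also what lets me dispose of the $n_e\geq 2$ range below.

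For the case $n_e=1$ I would specialise the all-inspect formula \eqref{eqn:VJ_1,*}. Setting $n_e=1$ and factoring (one checks $1-3\rho^2+2\rho^3=(1-\rho)^2(1+2\rho)$ against $1-\rho^2=(1-\rho)(1+\rho)$, and uses $\mu-\lambda=\mu(1-\rho)$) collapses the expression to $U_J(1,0)=R-c_w(1+2\rho)/(\mu(1+\rho))$, which rewrites precisely as $U_J(1,0)=R-c_w\!\left(2/\mu-1/(\mu+\lambda)\right)$. Since $n_e=1$ already guarantees the lower bound $R>c_w/\mu$, reading off the sign yields $U_J(1,0)\leq 0\iff R\leq c_w(2/\mu-1/(\mu+\lambda))\iff$ Scenario 3. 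This settles the equivalence whenever $n_e=1$.

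It remains to prove that $n_e\geq 2$ implies $U_J(1,0)>0$; combined with the first paragraph (Scenario 3 $\Rightarrow n_e=1$), both sides of the equivalence are then false for $n_e\geq 2$, so the equivalence holds there as well. I would argue this probabilistically rather than from the closed form. By \eqref{eqn:VJ}, $U_J(1,0)=R-c_w(E[Q\mid(1,0)]+1)/\mu$, and under $(P_I,P_J)=(1,0)$ the queue is the observable (Naor) chain supported on $\{0,\ldots,n_e\}$ with $\pi_i\propto\rho^i$. The crux, and the one real obstacle, is the bound $E[Q\mid(1,0)]<n_e-1$; this is equivalent to $\sum_{i=0}^{n_e}(i-(n_e-1))\rho^i<0$, where the $i=n_e$ term contributes $+\rho^{n_e}$, the $i=n_e-1$ term contributes $0$, the $i=n_e-2$ term contributes $-\rho^{n_e-2}$ (this index exists exactly because $n_e\geq2$), and all lower terms are nonpositive, so $\rho^{n_e}<\rho^{n_e-2}$ for $\rho<1$ makes the sum strictly negative. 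Then $E[Q\mid(1,0)]+1<n_e<R\mu/c_w$ gives $U_J(1,0)>0$. The expected difficulty lies entirely here: the closed form \eqref{eqn:VJ_1,*} has no transparent sign for general $n_e$, so the clean move is to bypass it and exploit the truncated-geometric structure, for which the required inequality reduces to an elementary comparison of powers of $\rho$.
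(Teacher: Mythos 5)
Your proof is correct, and its skeleton matches the paper's: the same case split on $n_e$, the same exact simplification when $n_e=1$ (the paper simplifies the right-hand bound on $R$ obtained from \eqref{eqn:VJ_1,*} rather than $U_J(1,0)$ itself, but the algebra is identical), and the same observation that Scenario 3 is impossible for $n_e\geq 2$. The genuine difference is how the two arguments show $U_J(1,0)>0$ when $n_e\geq 2$. The paper stays with the closed form: it writes $R=(n_ec_w+k)/\mu$, solves the threshold equality for $k$, and after factoring out $(1-\rho)$ arrives at
$$k = c_w\,\frac{\sum_{i=0}^{n_e}(i+1-n_e)\rho^i}{\sum_{i=0}^{n_e}\rho^i}<0,$$
bounding the numerator by its first and last terms ($1-n_e+\rho^{n_e}<0$). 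You instead work from \eqref{eqn:VJ} and the truncated-geometric steady state, reducing to $E[Q\mid(1,0)]<n_e-1$, i.e. $\sum_{i=0}^{n_e}\bigl(i-(n_e-1)\bigr)\rho^i<0$, killed by pairing the $i=n_e$ and $i=n_e-2$ terms ($\rho^{n_e}-\rho^{n_e-2}<0$). These are literally the same inequality---the paper's $k$ equals $c_w\bigl(E[Q\mid(1,0)]+1-n_e\bigr)$, since the threshold value of $R$ is exactly $c_w(E[Q]+1)/\mu$ by \eqref{eqn:VJ}---so the two proofs converge on the same crux. Yours makes the probabilistic meaning explicit and avoids the substitution-and-factoring detour, which is arguably cleaner and explains \emph{why} the inequality holds (the truncated queue rarely sits at its maximum when $\rho<1$); the paper's term-pairing is equally elementary once the algebra is done. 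Both bounds use $n_e\geq 2$ in the same essential way, and both are valid.
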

\begin{proof}
From \eqref{eqn:VJ_1,*}, we see that $U_J(1,0)\leq 0$ if and only if
\begin{align}
R&\leq\frac{c_w(1-\rho^{n_e+1}-(1-\rho)(n_e+1)\rho^{n_e+1})}{(\mu-\lambda)(1-\rho^{n_e+1})} \nonumber \\
&=\frac{c_w}{\mu-\lambda} \left[1-\frac{(1-\rho)(n_e+1)\rho^{n_e+1}}{1-\rho^{n_e+1}}\right] \label{eqn:rhs_cons_proof_UJ(1,0)}
\end{align}
Suppose that $n_e=1$, so $R\in[c_w/\mu,2c_w/\mu]$. With $n_e=1$, the right-hand side of \eqref{eqn:rhs_cons_proof_UJ(1,0)} simplifies to
\begin{align*}
\frac{c_w}{\mu-\lambda} \left[\frac{(1-\rho)(2\rho+1)}{1+\rho}\right]=\frac{c_w}{\mu} \left[\frac{2\rho+1}{1+\rho}\right]=\frac{c_w}{\mu} \left[2-\frac{1}{\rho+1}\right]=c_w \left[\frac{2}{\mu}-\frac{1}{\mu+\lambda}\right],
\end{align*}
which is precisely the condition of Scenario 3. Therefore, when $n_e=1$, Scenario 3 occurs if and only if $U_J(1,0)\leq 0$.

Now we consider $n_e>1$, in which case $R>2c_w/\mu$. Since $1/(\mu+\lambda)>0$, by definition, Scenario 3 cannot occur when $n_e>1$. To complete the proof, we show that the same is true for $U_J(1,0)\leq 0$.

Write $R=(n_ec_w+k)/\mu$. If we then solve \eqref{eqn:rhs_cons_proof_UJ(1,0)} as an equality for $k$ we obtain
\begin{align}
k&=c_w\left[\frac{1}{1-\rho}-\frac{(n_e+1)\rho^{n_e+1}}{1-\rho^{n_e+1}}-n_e\right] \nonumber \\
&=c_w\left[\frac{\sum_{i=0}^{n_e} \rho^i-(n_e+1)\rho^{n_e+1}}{(1-\rho)\sum_{i=0}^{n_e} \rho^i}-n_e\right], \label{eqn:k_express_proof}
\end{align}
where the equality follows from the identity $1-\rho^{n_e+1}=(1-\rho) \sum_{i=0}^{n_e} \rho^i$. Note that the top of the fraction in \eqref{eqn:k_express_proof} has $(1-\rho)$ as a factor, so we may simplify to
\begin{equation} \label{eqn:k_express_proof_2}
k=c_w\left[\frac{\sum_{i=0}^{n_e} (i+1)\rho^i}{\sum_{i=0}^{n_e} \rho^i}-n_e\right] = c_w\left[\frac{\sum_{i=0}^{n_e}(i+1-n_e) \rho^i}{\sum_{i=0}^{n_e} \rho^i}\right].
\end{equation}
Note that only the last term in the sum of the numerator of \eqref{eqn:k_express_proof_2} is positive. Therefore, considering only the first and last terms yields the inequality
$$k\leq c_w\left[\frac{1-n_e+\rho^{n_e}}{\sum_{i=0}^{n_e} \rho^i}\right]<0$$
when $n_e>1$.
The condition for $U_J(1,0)\leq 0$ therefore takes the form
\begin{equation} \label{eqn:k_proof_eqn}
R\leq \frac{n_ec_w+k}{\mu}<\frac{c_w n_e}{\mu}.
\end{equation}
Yet, when the threshold is $n_e$, we have $R\geq n_ec_w/\mu$, so \eqref{eqn:k_proof_eqn} never holds when $n_e>1$, completing the proof.
\end{proof}

Proposition \ref{lem:Scen3_prop} makes a considerable advancement on \blue{the analysis of \cite{Hassin2017} for $\rho<1$} by showing that Scenario 3 can \emph{only} occur if $n_e=1$, in other words, if inspectors will only join the queue if it is empty. The result also says that, in scenario 3, if all other customers inspect the queue then an arrival who chooses not to inspect always does better balking than joining, with the opposite true in Scenarios 1 and 2. This observation affects whether it is joiners or balkers who break the equilibrium $P_I=1$ as $C_I$ increases. 

In \emph{Proposition 1} of \cite{Hassin2017}, regions with different types of equilibrium are defined by the value of $\kappa\equiv C_I\mu/c_w$. To create our visualisation in terms of $R$ and $C_I$, we adapt relevant parts of \emph{Proposition 1} to define regions by the value of $C_I$. These regions will be determined by inequalities, one side of which will be $C_I$, and the other a function of the reward $R$ denoted $C_*^*(R)$. Throughout, we shall use the following naming procedure for these functions. The superscript will denote the scenarios on which the inequality applies. The subscript will denote both the identity (I, J or B) and the value (0 or 1) of the equilibrium probability the inequality concerns. For example, the inequality $C_{I,0}^{2,3}$ will define the region where the equilibrium has $P_I=0$ in scenarios 2 and 3. For each inequality, we provide a reference in a footnote to the equivalent inequality of \cite{Hassin2017}.

We split our analysis into three subsections depending on the value of $P_I$ at the equilibrium: no customers inspect ($P_I=0$), all customers inspect ($P_I=1$) and some customers inspect ($P_I\in(0,1)$). \blue{For each, we discuss the conditions on $R$ and $C_I$ under which each equilibrium type arises.}

\subsection{\blue{Conditions Where} No Customers Inspect} \label{sec:no_custs}
\blue{When no customers inspect ($P_I=0$), our model simplifies to the unobservable M/M/1 queue of \cite{Edelson1975}.} Table \ref{tab:HR_no_custs} adapts the inequalities of \emph{Proposition 1} in \cite{Hassin2017} \blue{to show the conditions under which the equilibrium has} $P_I=0$ across the three scenarios\footnote{The inequalities $\kappa \geq K_5$ and $\kappa\geq K_4$ from page 811 of \cite{Hassin2017} are (respectively) equivalent to the inequalities $C_I\geq C_{I,0}^{1}(R)$ and $C_I\geq C_{I,0}^{2,3}(R)$ from Table \ref{tab:HR_no_custs}.}.
To complement these findings, \blue{by setting our expression for $U_J(0,P_J)$ in \eqref{eqn:VJ_0,p_J} equal to 0 (which equates the utility from blindly joining and balking), we find that \blue{the equilibrium probability of blindly joining the queue in scenarios 2 and 3 satisfies}
\begin{equation*}
\tilde{P}_J=\dfrac{R\mu-c_w}{R\lambda}.
\end{equation*}}
Note that, as $R$ increases, joining blindly becomes more attractive, so we see $\tilde{P}_J$ increasing as a function of $R$ as a result. 

\begin{table}[h]
\caption{Summary of equilibria concerning $P_I=0$.} \label{tab:HR_no_custs}
\centering
\begin{tabular}{|c|c|c|c|}
\hline
 Scenario & Range of $R$ & Condition on $C_I$ & Equilibrium \\
 \hline & & & \\ [-2.5ex]
 1 & $R > \dfrac{c_w}{\mu-\lambda} $ & $C_I\geq C_{I,0}^{1}(R)\equiv \rho^{n_e} \left[ \dfrac{c_w n_e}{\mu}+\dfrac{c_w}{\mu-\lambda}-R\right]$ & $(0,1)$ \\[1.7ex]
 2 \& 3 & $R \in \left(\dfrac{c_w}{\mu},\dfrac{c_w}{\mu-\lambda}\right]$ & $C_I\geq C_{I,0}^{2,3}(R) \equiv \dfrac{n_ec_w}{\mu}\left(1-\dfrac{c_w}{R\mu} \right)^{n_e}$ & $(0,\tilde{P}_J)$\\[1.7ex]
 \hline
\end{tabular}
\end{table}

\begin{figure} 
\centering
\includegraphics[width=1\textwidth]{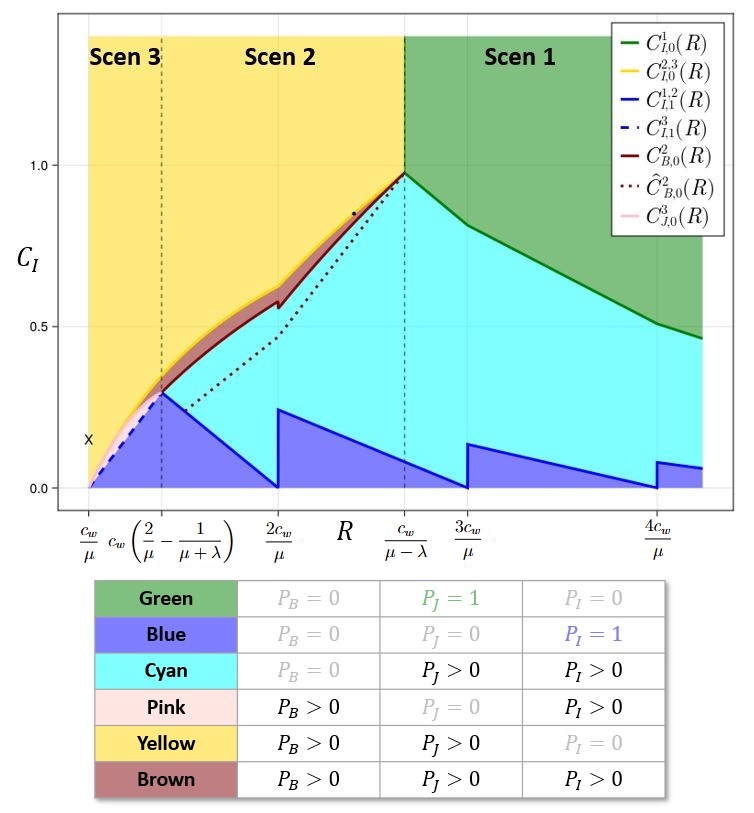} 
\caption{(Best viewed in colour) For $\lambda=0.5, \mu=0.8$ and $c_w=1$, equilibria as $R$ and $C_I$ vary. The scenarios 1, 2 and 3 are labelled at the top of the plot and separated by the black, vertical, dashed lines. The increments of $c_w/\mu$ on the $R$-axis demonstrate the regions where $n_e$ is constant. The colour represents a different equilibrium form in that region. The lines represent various functions from Section \ref{sec:find_eq} which define the coloured regions.} \label{fig:all} 
\end{figure}

\blue{In the remainder of this subsection, to understand customers' motivation to pay the inspection cost for different rewards $R$, we will investigate the functions $C_{I,0}^1(R)$ and $C_{I,0}^{2,3}(R)$, which bound the region where $P_I=0$ at equilibrium across all three scenarios. The properties below, which are visible in Figure \ref{fig:all}, will be formally proven in Appendix \ref{append:proofs_no_custs}.

\paragraph{Properties of $C_{I,0}^1(R)$:} For any $R>c_w/\mu$, the function $C_{I,0}^1(R)$ is piecewise linear with breakpoints at $xc_w/\mu$ for $x \in \{2,3,\ldots\}$. Further, $C_{I,0}^1(R)$ is continuous, strictly decreasing, strictly positive, and has limit 0 as $R\rightarrow \infty$.

\paragraph{Properties of $C_{I,0}^{2,3}(R)$:}
For any $R > c_w/\mu$, the function $C_{I,0}^{2,3}(R)$ is piecewise differentiable with breakpoints at $xc_w/\mu$ for $x \in \{2,3,\ldots\}$. Further, $C_{I,0}^{2,3}(R)$ is continuous, strictly increasing and strictly positive on this range.}

\vspace{15pt}

\blue{The breakpoints of both functions occur when the optimal threshold $n_e$ increases from $x-1$ to $x$---the precise moment at which joining a queue of length $x$ becomes the optimal choice.} We note that the strictly increasing property of $C_{I,0}^{2,3}(R)$ is in direct contrast to the strictly decreasing $C_{I,0}^{1}(R)$, even though both define a region on which the equilibrium has $P_I=0$, just in different scenarios. We explain this difference in the following.

The benefit of inspecting the queue is to compare the learned queue-length to the optimal threshold $n_e$ to ensure making the optimal joining decision. By Lemma \ref{lem:Scen1_prop}, if the queue is not inspected in scenario 1, it is always better (in expectation) to blindly join the queue than balk. Therefore, the equilibrium when $P_I=0$ is $P_J=1$ (). As the reward $R$ received on joining the queue increases, our confidence in joining blindly grows and grows, so paying a cost $C_I$ to inspect and ensure we make the optimal decision becomes less and less attractive. Therefore, $C_{I,0}^1(R)$, the maximum price we are willing to pay for queue-length information if everyone else joins the queue, decreases with $R$, decaying to 0 as we approach the $n_e=\infty$ case where queue-length information is useless.

On the other hand, in scenarios 2 and 3, the equilibrium when $P_I=0$ has $P_J=\tilde{P}_J$ satisfying $U_J(0,\tilde{P}_J)=0$; in other words, regardless of $R$, any arrival has utility 0 attained either from balking or joining. When $R=c_w/\mu$, inspecting the queue length is useless since $n_e=0$ and the queue is never optimally joined. As $R$ increases, however, it becomes more attractive to join the queue for any given queue length. Therefore, the expected utility of 0 attained at the equilibrium $(0,\tilde{P}_J)$ becomes less and less attractive compared to the expected utility that could be achieved by inspecting the queue with some strictly positive probability. 
Therefore, we see $C_{I,0}^{2,3}(R)$ increasing up to $R=c_w/(\mu-\lambda)$, when $\tilde{P}_J$ reaches 1 and we enter scenario 1.

Figure \ref{fig:all} shows that $C_{I,0}^{2,3}(R)$ and $C_{I,0}^{1}(R)$ coincide at $R=c_w/(\mu-\lambda)$, a fact which is easily proven to always be true by substitution into the definitions in Table \ref{tab:HR_no_custs}. 
Therefore, for $R>c_w/\mu$ there exists a \emph{continuous} boundary separating the area where the equilibrium involves customers never inspecting the queue length. We summarise in the following.
\begin{proposition} \label{thm:no_toll}
Write 
\begin{equation*}
C_{I,0}(R) \equiv \begin{cases}
C_{I,0}^{2,3}(R) \quad &\text{for }\frac{c_w}{\mu} < R \leq \frac{c_w}{\mu-\lambda},\\
C_{I,0}^{1}(R) \quad &\text{for }R \geq \frac{c_w}{\mu-\lambda}.
\end{cases}
\end{equation*}
Then $C_{I,0}(R)$ is continuous and piecewise \blue{differentiable} with breakpoints at $c_w/(\mu-\lambda)$ and $xc_w/\mu$ for $x\in \{2,3,\ldots\}$. Whenever $C_I \geq C_{I,0}(R)$, the unique equilibrium involves no customer inspecting the queue. 
\end{proposition}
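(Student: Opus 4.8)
The plan is to assemble the three claims—continuity, piecewise differentiability with the stated breakpoints, and the equilibrium characterisation—from the properties of $C_{I,0}^1(R)$ and $C_{I,0}^{2,3}(R)$ stated immediately above, together with Table \ref{tab:HR_no_custs} and the coincidence of the two functions at $R=c_w/(\mu-\lambda)$ noted just before the proposition.

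First I would dispatch the equilibrium claim, which is the substantive statement and follows directly by splitting on the scenario. For $c_w/\mu < R \leq c_w/(\mu-\lambda)$ we are in scenario 2 or 3, so $C_{I,0}(R)=C_{I,0}^{2,3}(R)$; thus $C_I \geq C_{I,0}(R)$ is exactly the condition in the second row of Table \ref{tab:HR_no_custs}, forcing the equilibrium $(0,\tilde P_J)$, which has $P_I=0$. For $R > c_w/(\mu-\lambda)$ we are in scenario 1, so $C_{I,0}(R)=C_{I,0}^1(R)$, and $C_I\geq C_{I,0}(R)$ is the condition in the first row of the table, forcing the equilibrium $(0,1)$, again with $P_I=0$. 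Since a unique equilibrium always exists by \emph{Theorem 1} of \cite{Hassin2017}, in either case this $P_I=0$ equilibrium is the unique one, so no customer inspects.

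For continuity, each of the two pieces is continuous on the interior of its interval by the properties stated above, so the only point requiring attention is the junction $R=c_w/(\mu-\lambda)$. Here I would verify by substitution that the two pieces agree: at $R=c_w/(\mu-\lambda)$ one has $c_w/(\mu-\lambda)-R=0$, so $C_{I,0}^1$ reduces to $\rho^{n_e} c_w n_e/\mu$, while $c_w/(R\mu)=1-\rho$, so $C_{I,0}^{2,3}$ reduces to $(c_w n_e/\mu)\rho^{n_e}$; the two expressions coincide, using that $n_e$ is determined by $R$ alone and is therefore the same value in both formulae at this point. This establishes continuity across the junction and hence on all of $(c_w/\mu,\infty)$.

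Finally, for piecewise differentiability and the breakpoint list, I would union the breakpoint sets of the two pieces. On $(c_w/\mu,c_w/(\mu-\lambda))$ the function equals $C_{I,0}^{2,3}$, piecewise differentiable with breakpoints among the $xc_w/\mu$; on $(c_w/(\mu-\lambda),\infty)$ it equals the piecewise-linear $C_{I,0}^1$, again with breakpoints among the $xc_w/\mu$; and the switch between pieces introduces the single additional breakpoint $c_w/(\mu-\lambda)$. Differentiability genuinely fails there, since $C_{I,0}^{2,3}$ is increasing while $C_{I,0}^1$ is decreasing, so the one-sided derivatives have opposite signs—consistent with listing it as a breakpoint. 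I do not expect any real obstacle: the only non-trivial computation is the coincidence at $R=c_w/(\mu-\lambda)$, and the only point needing care is to respect which scenario each subinterval belongs to when invoking Table \ref{tab:HR_no_custs}.
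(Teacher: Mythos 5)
Your proposal is correct and follows essentially the same route as the paper: the proposition is there presented as a summary of the appendix-proven properties of $C_{I,0}^{1}(R)$ and $C_{I,0}^{2,3}(R)$, the coincidence of the two functions at $R=c_w/(\mu-\lambda)$ (which the paper asserts "by substitution" and you verify explicitly), and the equilibrium conditions of Table \ref{tab:HR_no_custs} combined with the uniqueness result of \cite{Hassin2017}. Your explicit substitution check at the junction and the sign argument for the one-sided derivatives are welcome elaborations, but they do not change the structure of the argument.
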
 

\subsection{\blue{Conditions Where} All Customers Inspect} \label{sec:all_custs}
\blue{When all customers inspect ($P_I=1$), our model simplifies to the observable M/M/1 queue of \cite{Naor1969}. Clearly this is the case} if $C_I=0$, but as inspecting becomes more and more costly, the equilibrium probability $P_I$ eventually begins to fall from 1. We find that the value of $C_I$ at which this occurs is given by a different formula\footnote{The work \cite{Hassin2017} derives the two formulas in \emph{Proposition 1}, but does not note they are split by scenarios.} for scenario 3 than scenarios 1 and 2. The reason is Proposition \ref{lem:Scen3_prop}: in scenario 3, the $P_I=1$ equilibrium is broken when balking becomes the best response to $P_I=1$, whilst in scenarios 1 and 2 the change occurs when blindly joining becomes the best response.

For scenario 3, \cite{Hassin2017} presents an inequality\footnote{Denoted $\kappa \leq K_1 \equiv \min (S_1,S_2)$ on page 811 of \cite{Hassin2017}. In scenario 3 when $\rho<1$, we find that $K_1=S_2$. Note the term $n_e\rho^{n_e-1}$ in the numerator of the right-hand term of $S_2$ should be $n_e\rho^{n_e+1}$.} for the point at which $P_I=1$ ceases to be the equilibrium which involves the ratios of polynomials of degree $n_e$ and $n_e+1$.
For the case $\rho < 1$ \blue{studied in this paper,} we use Proposition \ref{lem:Scen3_prop}, which shows that scenario 3 is only possible when $n_e=1$, to greatly simplify this inequality. Our new inequality, obtained by setting $n_e=1$ in that of \cite{Hassin2017}, is given as $C_I \leq C_{I,1}^3(R)$ in Table \ref{tab:HR_all_custs}, which also contains an adapted version of an inequality\footnote{As in previous footnote, but in scenarios 1 and 2, we find that $K_1=S_1$.} of \cite{Hassin2017} for scenarios 1 and 2 \blue{applicable to the case $\rho < 1$}.

\begin{table}[h]
\caption{Cases when $P_I=1$ is the equilibrium.} \label{tab:HR_all_custs}
\centering
\begin{tabular}{|c|c|c|}
\hline
 Scenario & Range of $R$ & Condition on $C_I$\\
 \hline & & \\ [-2.5ex]
 1 \& 2 & $R > c_w\left(\dfrac{2}{\mu}-\dfrac{1}{\mu+\lambda}\right) $ & $C_I\leq C_{I,1}^{1,2}(R)\equiv \dfrac{(1-\rho)\rho^{n_e}}{1-\rho^{n_e+1}}\left[\dfrac{c_wn_e}{\mu}+\dfrac{c_w}{\mu}-R\right]$ \\[1.7ex]
 3 & $R \in \left(\dfrac{c_w}{\mu},c_w\left(\dfrac{2}{\mu}-\dfrac{1}{\mu+\lambda}\right)\right]$ & $C_I\leq C_{I,1}^{3}(R) \equiv \dfrac{R\mu-c_w}{\mu+\lambda}$ \\[1.7ex]
 \hline
\end{tabular}
\end{table}

\blue{The functions $C_{I,1}^{3}(R)$ and $C_{I,1}^{1,2}(R)$ bound the region where all customers optimally inspect. In the remainder of this subsection, we will study properties of these functions to explain, in particular, the relationship between the discrete nature of the optimal threshold $n_e$ with the value of queue length information.

\paragraph{Properties of $C_{I,1}^{3}(R)$:} For $R>0$, $C_{I,1}^{3}(R)$ increases linearly in $R$ from $C_{I,1}^{3}(c_w/\mu)=0$.

\vspace{15pt}

The above properties of $C_{I,1}^{3}(R)$ are immediate from its formula in Table \ref{tab:HR_all_custs}. They make logical sense: when $R=c_w/\mu$ we have $n_e=0$, so inspecting the queue is useless. As $R$ increases, it becomes more and more worthwhile to join an empty queue, so the value in inspecting the queue over balking (the best alternative) increases. 

The following properties of $C_{I,1}^{1,2}(R)$ above are proved formally in Appendix \ref{append:proofs_all_custs}.

\paragraph{Properties of $C_{I,1}^{1,2}(R)$:}
For $R>c_w/\mu$, the function $C_{I,1}^{1,2}(R)$ is piecewise linear, with breakpoints at $xc_w/\mu$ for $x \in \{2,3,\ldots\}$, and each linear segment strictly decreasing. Further, for any $R>c_w/\mu$, we have $C_{I,1}^{1,2}(R)\geq 0$, with $C_{I,1}^{1,2}(R)=0$ if and only if $R=xc_w/\mu$ for some $x\in \{2,3\ldots \}$. The sequence
\begin{equation} \label{eqn:toothheight}
\left\{\lim_{\delta\downarrow 0} C_{I,1}^{1,2}\left(\frac{xc_w}{\mu}+\delta\right): x=2,3\ldots\right\}    
\end{equation}
is strictly decreasing with limit 0.

\vspace{15pt}

Since it also depends on the optimal threshold $n_e$, the function $C_{I,1}^{1,2}(R)$ has the same breakpoints as $C_{I,0}(R)$, which bounds the region where $P_I=0$.} However, unlike $C_{I,0}(R)$, we see in Figure \ref{fig:all} that $C_{I,1}^{1,2}(R)$ is not continuous in $R$, showing a sawtooth pattern, with the height of each tooth in \eqref{eqn:toothheight} decreasing as $n_e$ increases. \blue{To explain this behaviour,} note that $C_{I,1}^{1,2}(R)$ is the maximum value of $C_I$ at which all arrivals inspecting is an equilibrium. Suppose $R\in (c_wx/\mu,c_w(x+1)/\mu]$, so $n_e=x$ and inspecting arrivals will only join a queue of length $x-1$ or less. Therefore, if all arrivals inspect, the queue length must belong to the set $\{0,1,\ldots , x\}$. When $R=c_w(x+1)/\mu$, the reward is equal to the expected waiting cost when joining a length-$x$ queue. Therefore, a new arrival is indifferent between joining and not joining when the queue is length $x$, so the queue is never a length where joining is suboptimal. Hence, a new arrival has no need for queue-length information, so for any $C_I>0$, will not pay to inspect if everyone else has. In other words, $P_I=1$ ceases to be the equilibrium as soon as $C_I$ increases from 0, explaining why $C_{I,1}^{1,2}(R)$ takes value 0 at its breakpoints.

As $R$ drops from $c_w(x+1)/\mu$ to $c_wx/\mu$, the optimal threshold stays at $x$, but joining a queue of length $x$ becomes more and more unfavourable, so joining the queue blindly instead of inspecting carries more and more risk. This increases the value of queue-length information if everyone else has inspected. Hence, a new arrival is willing to pay a higher and higher price to inspect, meaning $P_I=1$ remains the equilibrium for larger and larger $C_I$.

As $R$ increases across breakpoints (increasing the value of $x$), for a fixed traffic intensity, the steady state probability of the queue being length $x$ when all customers inspect decreases, reducing the amplitude of the above effect. Therefore, the height of the teeth decreases with $R$, decaying to 0 as $n_e=\infty$ (where queue-length information is useless) is approached.

\blue{Figure \ref{fig:all} shows that $C_{I,1}^{3}(R)$ and $C_{I,1}^{1,2}(R)$ coincide as we pass from scenario 3 to scenario 2 (at $R=c_w(2/\mu -1/(\mu+\lambda))$), which is easily verified by their formulae. We summarise in the following.}

\begin{proposition} \label{thm:all_toll}
Write 
\begin{equation*}
C_{I,1}(R) \equiv \begin{cases}
C_{I,1}^{3}(R) \quad &\text{for }\frac{c_w}{\mu} \leq R \leq c_w\left(\frac{2}{\mu}-\frac{1}{\mu+\lambda}\right),\\
C_{I,1}^{1,2}(R) \quad &\text{for }R \geq c_w\left(\frac{2}{\mu}-\frac{1}{\mu+\lambda}\right).
\end{cases}
\end{equation*}
Then $C_{I,1}(R)$ is piecewise \blue{continuous} with breakpoints at $c_w\left(\frac{2}{\mu}-\frac{1}{\mu+\lambda}\right)$ and $xc_w/\mu$ for $x\in \{2,3\ldots\}$. The only discontinuities of $C_{I,1}(R)$ are at $xc_w/\mu$ for $x\in \{2,3\ldots\}$. Whenever $C_I \leq C_{I,1}(R)$, all customers inspect the queue at the unique equilibrium.
\end{proposition}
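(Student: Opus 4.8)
The plan is to split the proposition into two essentially independent claims: the analytic claims (piecewise continuity, the breakpoint set, and the exact location of the discontinuities) and the equilibrium claim, which I expect to fall out almost immediately from Table \ref{tab:HR_all_custs}. The analytic part will be an assembly argument built from the already-stated ``Properties of $C_{I,1}^{3}(R)$'' and ``Properties of $C_{I,1}^{1,2}(R)$'' together with the coincidence of the two functions at the scenario boundary.

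First I would locate the switch point $R^\ast \equiv c_w(2/\mu - 1/(\mu+\lambda))$ relative to the grid $\{x c_w/\mu\}$. Since $0 < 1/(\mu+\lambda) < 1/\mu$, we get $c_w/\mu < R^\ast < 2c_w/\mu$, so $R^\ast$ lies strictly between the first two grid points. Consequently the breakpoint set of $C_{I,1}$ is exactly $\{R^\ast\} \cup \{x c_w/\mu : x \geq 2\}$: on $[c_w/\mu, R^\ast]$ the function equals $C_{I,1}^{3}$, which is linear and hence continuous there, and on $[R^\ast,\infty)$ it equals $C_{I,1}^{1,2}$, whose breakpoints are the grid points $x c_w/\mu$, $x\geq 2$, by its stated properties. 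For continuity at $R^\ast$ I would invoke the coincidence $C_{I,1}^{3}(R^\ast) = C_{I,1}^{1,2}(R^\ast)$ noted just before the proposition: $C_{I,1}^{3}$ is continuous from the left at $R^\ast$ and $C_{I,1}^{1,2}$ (its first linear segment, with $n_e=1$, running over $(R^\ast, 2c_w/\mu]$) is continuous from the right, and the matching values glue them continuously. Thus $R^\ast$ is a breakpoint but not a discontinuity, and in fact $C_{I,1}$ is continuous on all of $[c_w/\mu, 2c_w/\mu]$, rising along $C_{I,1}^{3}$ to a peak at $R^\ast$ and then falling to $0$ at $2c_w/\mu$.

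Next, for the discontinuities at $x c_w/\mu$ with $x\geq 2$, note these all lie in the region where $C_{I,1} = C_{I,1}^{1,2}$. By the stated properties, $C_{I,1}^{1,2}$ vanishes exactly at the grid points while its right-limits there---the sawtooth heights in \eqref{eqn:toothheight}---are strictly positive. Since each linear segment is strictly decreasing and, on the interval where $n_e = x-1$, reaches $0$ at its right endpoint $x c_w/\mu$, the left limit at each $x c_w/\mu$ is $0$ whereas the right limit is strictly positive, producing a jump. These are the only discontinuities, because on every open subinterval between consecutive breakpoints the function is linear, and $R^\ast$ has already been shown continuous. Finally, for the equilibrium claim, $C_I \leq C_{I,1}(R)$ is by construction exactly the condition ``$C_I \leq C_{I,1}^{3}(R)$ in scenario 3'' (for $R \leq R^\ast$) combined with ``$C_I \leq C_{I,1}^{1,2}(R)$ in scenarios 1 and 2'' (for $R \geq R^\ast$); Table \ref{tab:HR_all_custs} states each of these is precisely the condition under which $P_I=1$ is the equilibrium in the relevant scenario, and the equilibrium is unique by Theorem 1 of \cite{Hassin2017}. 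Hence $C_I \leq C_{I,1}(R)$ forces the unique equilibrium to have all customers inspecting.

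The main obstacle will be purely the one-sided-limit bookkeeping that distinguishes the harmless breakpoint $R^\ast$ from the genuine jumps at $x c_w/\mu$. In particular, I must verify carefully that the scenario-3 segment and the first ($n_e=1$) segment of $C_{I,1}^{1,2}$ glue continuously across $R^\ast$, so that no spurious discontinuity is introduced before $2c_w/\mu$, and that the left value at each grid point $x c_w/\mu$ is genuinely $0$ (rather than the positive right-limit) so that the jump is correctly attributed. Everything else reduces to quoting the stated monotonicity and vanishing properties of the two constituent functions.
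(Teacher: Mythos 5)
Your proposal is correct and follows essentially the same route as the paper: the paper treats this proposition as a summary assembled from the stated properties of $C_{I,1}^{3}(R)$ (immediate from its formula in Table \ref{tab:HR_all_custs}), the properties of $C_{I,1}^{1,2}(R)$ proved in Appendix \ref{append:proofs_all_custs} (vanishing exactly at the grid points $xc_w/\mu$ with strictly positive right limits given by \eqref{eqn:toothheight}), the easily verified coincidence $C_{I,1}^{3}(R^\ast)=C_{I,1}^{1,2}(R^\ast)$ at the scenario boundary, and the equilibrium conditions of Table \ref{tab:HR_all_custs} together with uniqueness from \emph{Theorem 1} of \cite{Hassin2017}. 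Your additional bookkeeping (locating $R^\ast$ strictly between $c_w/\mu$ and $2c_w/\mu$, and attributing the jumps via left limit $0$ versus positive right limit) is exactly the verification the paper leaves implicit.
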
 

\subsection{\blue{Equilibria Where} Some Customers Inspect} \label{sec:somecusts}
For those equilibria where only some customers inspect the queue (where $P_I \in (0,1)$) we split our analysis by scenario. \blue{In each, we make an advancement on the analysis of \cite{Hassin2017} for the case where $\rho<1$.}

\paragraph{Scenario 1} It follows from Lemma \ref{lem:Scen1_prop} that any equilibrium in scenario 1 must have $P_B=0$. Combined with the results in Tables \ref{tab:HR_no_custs} and \ref{tab:HR_all_custs}, when $C_{I,1}^{1,2}(R) < C_I < C_{I,0}^1(R)$, the equilibrium is $(1-P_J^*,P_J^*)$ for $P_J^* \in (0,1)$.

By definition, the equilibrium probability $P_J^*$ satisfies $U_{J-I}(1-P_J^*,P_J^*)=0$. If we note that $\rho_L=\rho$ when $P_B=0$ and recall that $\rho_U=P_J \rho$, then we may use \eqref{eqn:V_J-I_eqn_closed} to reduce this equation to the following quadratic in $y \equiv 1-P_J^*\rho$:
\begin{equation} \label{eqn:quad_simple}
y^2\left(C_I\frac{\rho^{-n_e}-1}{1-\rho}\right)+y\left(C_I+R-\frac{c_wn_e}{\mu}\right)-\frac{c_w}{\mu}=0.
\end{equation}
Note that both the quadratic and linear coefficients in \eqref{eqn:quad_simple} are strictly positive, the former since \blue{we assume} $\rho<1$ and the latter because $R\mu>c_wn_e$ by the definition of $n_e$. Clearly the constant term is negative. Therefore, the product of the roots is negative, meaning there is one positive and one negative root. Selecting the positive root from the quadratic formula obtains a closed form for $P_J^*$.

Note that, if $C_I=C_{I,1}^{1,2}(R)$ is substituted into \eqref{eqn:quad_simple}, then setting $y=1$ yields 0; therefore, $y=1$ is the positive root of \eqref{eqn:quad_simple} so $P_J^*=0$. Likewise, if we set $C_I=C_{I,0}^{1}(R)$ and $y=1-\rho$, then \eqref{eqn:quad_simple} becomes 0 from which it follows that $P_J^*=1$. 
In Section \ref{sec:SW_lblue} when social welfare is analysed, it will be shown that $P_J^*$ is increasing on the region $[C_{I,1}^{1,2}(R),C_{I,0}^{1}(R)]$ as it becomes more costly to inspect the queue. 

\paragraph{Scenario 2} Moving from scenario 1 to scenario 2, the reward $R$ drops such that balking may now form part of the equilibrium. \emph{Proposition 1} of \cite{Hassin2017} includes a value\footnote{Denoted $K_2$ on page 811 of \cite{Hassin2017}. We multiply by $c_w/\mu$ to obtain $C_{B,0}^2(R)$.} , which we denote $C_{B,0}^2(R)$, that divides the interval on which $P_I \in (0,1)$ into two: for $C_I \in (C_{I,1}^{1,2}(R), C_{B,0}^2(R)]$ we have $P_B=0$, whilst for $C_I \in (C_{B,0}^{2}(R), C_{I,0}^{2,3}(R))$ all three equilibrium probabilities are non-zero. However, the form of $C_{B,0}^2(R)$ is extremely complicated, involving several layers of nested fractions involving polynomials of degree $n_e$. 
In the following, \blue{for the case where $\rho < 1$,} we offer a much simpler function which bounds the region where $P_B=0$.

\begin{corollary} \label{corol:bound_on_p*,1}
Write
\begin{equation} \label{eqn:bound_C_J,1^2(R)}
\widehat{C}_{B,0}^2(R)\equiv R(1-\rho^{n_e})+\frac{c_w((n_e+1)\rho^{n_e}-n_e\rho^{n_e+1}-1)}{\mu-\lambda}.
\end{equation}
In scenario 2, whenever we have
$$C_I \in (C_{I,1}^{1,2}(R),\widehat{C}_{B,0}^2(R)),$$ 
the equilibrium has $P_B=0$.
\end{corollary}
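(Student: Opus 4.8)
The plan is to recognise that the hypothesis $C_I < \widehat{C}_{B,0}^2(R)$ is nothing more than the statement that \emph{inspecting strictly beats balking when every other customer joins blindly}. Concretely, I would first set $P_J=1$ in the expression \eqref{eqn:VI_0,p_J} for $U_I(0,P_J)$. Collecting the waiting-cost term over the common denominator $\mu-\lambda$ and comparing with \eqref{eqn:bound_C_J,1^2(R)}, a short rearrangement yields the identity
\begin{equation*}
U_I(0,1)=\widehat{C}_{B,0}^2(R)-C_I .
\end{equation*}
Hence the assumption $C_I<\widehat{C}_{B,0}^2(R)$ is exactly equivalent to $U_I(0,1)>0$.

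With this reformulation in hand, I would argue by contradiction. Suppose the equilibrium $(P_I,P_J)$ has $P_B>0$. Since balking then lies in the support of a best response to itself, its utility must attain the maximum, so $U_I(P_I,P_J)\leq U_B=0$. Because $P_B>0$, I can reach the all-blind-join profile $(0,1)$ from $(P_I,P_J)$ by first moving the balking mass to blind joining, $(P_I,P_J)\mapsto(P_I,1-P_I)$, and then moving the inspecting mass to blind joining, $(P_I,1-P_I)\mapsto(0,1)$. The first move is part \ref{lempart:bk-bj} of Lemma \ref{lem:mono_all_U} with $x=P_B>0$ and so \emph{strictly} decreases $U_I$; the second is part \ref{lempart:qi-bj} and decreases $U_I$ (trivially an equality if $P_I=0$). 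Chaining these gives
\begin{equation*}
U_I(0,1)\leq U_I(P_I,1-P_I)<U_I(P_I,P_J)\leq 0 ,
\end{equation*}
which contradicts $U_I(0,1)>0$. Therefore $P_B=0$, as required; since the equilibrium is unique by \emph{Theorem 1} of \cite{Hassin2017}, this settles the claim.

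Two remarks on the scope of the argument. First, only the upper bound $C_I<\widehat{C}_{B,0}^2(R)$ is actually used; the lower bound $C_I>C_{I,1}^{1,2}(R)$ merely places us in the regime $P_I<1$ where the statement is not already trivial (when $P_I=1$, clearly $P_B=0$). Second, the monotonicity step invokes exactly the mechanism already established within the proof of Lemma \ref{lem:mono_all_U}, so no new estimates are needed. The only genuinely creative step---and the one I expect to be the crux---is spotting the identity $U_I(0,1)=\widehat{C}_{B,0}^2(R)-C_I$, which turns the opaque threshold $\widehat{C}_{B,0}^2(R)$ into a transparent utility comparison and thereby reduces the whole statement to a one-line congestion-monotonicity argument. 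This is also why the result is naturally phrased as a corollary rather than a proposition.
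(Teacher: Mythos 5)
Your proposal is correct and takes essentially the same approach as the paper: both hinge on the identity $U_I(0,1)=\widehat{C}_{B,0}^2(R)-C_I$ obtained from \eqref{eqn:VI_0,p_J} with $P_J=1$, and on Lemma \ref{lem:mono_all_U} to show that $(0,1)$ minimises $U_I$ over all strategy profiles, so that $C_I<\widehat{C}_{B,0}^2(R)$ makes balking never a best response. The only cosmetic differences are that you argue by contradiction and pass through the intermediate profile $(P_I,1-P_I)$ using parts \ref{lempart:bk-bj} and \ref{lempart:qi-bj}, whereas the paper argues directly via the chain $U_I(P_I,P_J)\geq U_I(1-P_J,P_J)\geq U_I(0,1)$ using parts \ref{lempart:bk-qi} and \ref{lempart:qi-bj}.
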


\begin{proof}
Note that if $U_I(P_I,P_J)>0$ for all $P_I$ and $P_J$, then balking is never a best response, 
so the equilibrium must have the form $P_B=0$. By Lemma \ref{lem:mono_all_U}, we have $U_I(P_I,P_J)\geq U_I(1-P_J,P_J) \geq U_I(0,1)$, so 
\blue{we must have $U_I(P_I,P_J)>0$ for all $P_I$ and $P_J$ if $U_I(0,1)>0$}. Using \eqref{eqn:VI_0,p_J} with $P_J=1$, \blue{the inequality $U_I(0,1)>0$} is equivalent to $C_I < \widehat{C}_{B,0}^2(R)$. \blue{We conclude that} the equilibrium must have $P_B=0$ under this condition. 
\end{proof}

Figure \ref{fig:all} shows both $C_{B,0}^2(R)$ and $\widehat{C}_{B,0}^2(R)$, the latter as a dotted line. It can be seen that the bound $\widehat{C}_{B,0}^2(R)$ determines the majority of scenario 2 cases with $P_B=0$ in example of Figure \ref{fig:all}, with the same seen in many other numerical examples we checked.

We see from Figure \ref{fig:all} that both $C_{B,0}^2(R)$ and $\widehat{C}_{B,0}^2(R)$ are increasing in $R$, a property which makes logical sense, since the larger the reward, the more inclined an arrival is to join with certainty if they do not inspect the queue. Indeed, it is clear from \eqref{eqn:bound_C_J,1^2(R)} that $\widehat{C}_{B,0}^2(R)$ increases linearly in $R$ for fixed $n_e$.

As for the values of the equilibrium probabilities in scenario 2, when $P_B=0$, like scenario 1, the equilibrium is $(1-P_J^*,P_J^*)$ where $y=1-P_J^* \rho$ is the positive root of \eqref{eqn:quad_simple}. When all three equilibrium probabilities are non-zero, the equilibrium satisfies $U_I(P_I,P_J)=U_J(P_I,P_J)=0$, for which no closed form for the equilibrium probabilities $P_I$ and $P_J$ could be found.

\paragraph{Scenario 3}
Moving into scenario 3, by Proposition \ref{lem:Scen3_prop} balking now becomes the best response to $P_I=1$ instead of blindly joining. Consequently, the region with $P_I \in (0,1)$ is now divided into $P_J=0$ and $P_I,P_J,P_B>0$.
\emph{Proposition 1} of \cite{Hassin2017} includes a polynomial of degree $n_e+1$ to solve to obtain an inequality\footnote{Denoted $\kappa \leq K_3$ on page 811 of \cite{Hassin2017}.} defining this division.
\blue{For the $\rho < 1$ case}, we may use Proposition \ref{lem:Scen3_prop} to set $n_e=1$ and forgo the need for the polynomial, greatly simplifying the dividing inequality. In addition, our method provides a simple closed formula for the equilibrium probability $P_I^*$ when $P_J=0$.

\begin{proposition} \label{prop:pI,0:class}
In Scenario 3, when
\begin{equation} \label{eqn:defn_CJO}
C_{I,1}^{3}(R)\leq C_I \leq C_{J,0}^{3}(R)\equiv \dfrac{(R\mu-c_w)(2c_w-R\mu)}{\mu},
\end{equation}
the unique equilibrium is $(P_I^*,0)$ where
\begin{equation} \label{eqn:defn_PI*}
P_I^* \equiv \dfrac{R\mu-c_w-C_I\mu}{C_I \lambda}.
\end{equation}
\end{proposition}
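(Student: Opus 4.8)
The strategy is to verify directly that the proposed strategy $(P_I^*,0)$ satisfies the best-response (equilibrium) conditions, and then to appeal to the global uniqueness of \emph{Theorem 1} in \cite{Hassin2017}: any strategy shown to be an equilibrium must be \emph{the} equilibrium, so I never have to argue uniqueness from scratch. By Proposition \ref{lem:Scen3_prop}, Scenario 3 forces $n_e=1$, so I would set $n_e=1$ throughout. Since the candidate plays inspect with probability $P_I^*$ and balk with probability $1-P_I^*$ but never joins blindly, and since balking yields utility $0$, the best-response conditions reduce to three checks: (i) $P_I^*\in(0,1]$ is a valid probability; (ii) the indifference $U_I(P_I^*,0)=0$ between the two played actions; and (iii) the no-deviation inequality $U_J(P_I^*,0)\le 0$, ensuring that blind joining is not strictly preferred.

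For (ii), I would specialise \eqref{eqn:VI_form_gen} to $P_J=0$ (hence $\rho_U=0$ and $\rho_\Delta=\rho_L=P_I\rho$) with $n_e=1$; after cancelling the common factor $1-\rho_L$ this collapses to the clean form
\[
U_I(P_I,0)=\frac{R\mu-c_w}{\mu(1+P_I\rho)}-C_I .
\]
Solving $U_I(P_I^*,0)=0$ is then a single linear equation whose solution is exactly \eqref{eqn:defn_PI*}. For (i), clearing the positive denominator $C_I\lambda$ shows that $P_I^*\le 1$ is equivalent to $C_I\ge(R\mu-c_w)/(\mu+\lambda)=C_{I,1}^3(R)$, which is precisely the stated lower endpoint (and at which $P_I^*=1$, matching the all-inspect region of Proposition \ref{thm:all_toll}). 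Positivity $P_I^*>0$ is equivalent to $C_I<(R\mu-c_w)/\mu$, and a short estimate using the Scenario 3 bounds on $R$ shows the upper endpoint $C_{J,0}^3(R)$ lies strictly below $(R\mu-c_w)/\mu$, so positivity holds automatically on the whole interval.

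For (iii), the same specialisation of \eqref{eqn:VJ_form_gen} gives
\[
U_J(P_I,0)=R-\frac{c_w(1+2P_I\rho)}{\mu(1+P_I\rho)},
\]
which is strictly decreasing in $P_I$, as guaranteed by part \ref{lempart:bk-qi} of Lemma \ref{lem:mono_all_U}. Since $P_I^*$ is itself decreasing in $C_I$, the composite $U_J(P_I^*,0)$ is increasing in $C_I$ and hence attains its maximum over the region at the upper endpoint $C_I=C_{J,0}^3(R)$. The heart of the argument, and the step I expect to be the main obstacle, is to show that this maximum value is exactly $0$, i.e.\ that $U_J(P_I^*,0)=0$ precisely when $C_I=C_{J,0}^3(R)$, so that $U_J(P_I^*,0)\le 0$ on the entire interval $[C_{I,1}^3(R),C_{J,0}^3(R)]$ (with join merely tied, not strictly preferred, at the top). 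This is a single but delicate algebraic identity: substituting the value of $P_I^*$, equivalently eliminating $\rho_L$ between the indifference condition and $U_J(P_I,0)=0$, must reproduce the threshold $C_{J,0}^3(R)$ of \eqref{eqn:defn_CJO} on the nose. With (i)--(iii) verified, $(P_I^*,0)$ is an equilibrium, and the uniqueness result of \cite{Hassin2017} completes the proof.
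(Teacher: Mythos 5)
Your route is structurally identical to the paper's proof: invoke Proposition \ref{lem:Scen3_prop} to fix $n_e=1$, specialise the utilities to $P_J=0$ (your two displayed formulas are exactly \eqref{eqn:VI_l=1} and \eqref{eqn:VJ_l=1} of the paper evaluated at $P_J=0$), solve the indifference $U_I(P_I^*,0)=0$ to obtain \eqref{eqn:defn_PI*}, convert $P_I^*\leq 1$ into $C_I\geq C_{I,1}^{3}(R)$, and use the equilibrium characterisation $U_I(P_I^*,0)=0\geq U_J(P_I^*,0)$ together with the uniqueness theorem of \cite{Hassin2017}. All of that is correct and is exactly what the paper does.

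The genuine gap is the step you yourself single out as the heart of the argument and then leave as an assertion: that $U_J(P_I^*,0)\leq 0$ holds precisely when $C_I\leq C_{J,0}^{3}(R)$. Your monotonicity reduction (that $U_J(P_I^*,0)$ increases in $C_I$, so only the right endpoint matters) is valid but discharges nothing, because the endpoint identity is exactly where the formula $(R\mu-c_w)(2c_w-R\mu)/\mu$ has to come from; without verifying it, nothing about \eqref{eqn:defn_CJO} has been proved. The verification is in fact short and needs no endpoint argument: the indifference condition gives $1/(1+\rho P_I^*)=C_I\mu/(R\mu-c_w)$, and substituting this into your own expression for $U_J(P_I,0)$ yields
\begin{equation*}
U_J(P_I^*,0)=R-\frac{2c_w}{\mu}+\frac{c_w C_I}{R\mu-c_w},
\end{equation*}
which is non-positive if and only if $C_I\leq (R\mu-c_w)(2c_w-R\mu)/(c_w\mu)$, settling the whole interval at once; this is how the paper concludes. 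Carrying the computation out also exposes a wrinkle that your ``on the nose'' claim conceals: the threshold emerges with an extra $c_w$ in the denominator, so it coincides with the stated $C_{J,0}^{3}(R)$ only under the normalisation $c_w=1$ (the one used in the paper's figures; the paper's own proof makes the same silent identification, as does your claim that $C_{J,0}^{3}(R)<(R\mu-c_w)/\mu$, which amounts to $2c_w-R\mu<1$). In short: right structure, same as the paper's, but the one computation that actually produces \eqref{eqn:defn_CJO} is missing from your proposal, and asserting it rather than doing it is not a harmless omission here.
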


\begin{proof}
Suppose we are in Scenario 3. Then, by Proposition \ref{lem:Scen3_prop}, we must have $n_e=1$, so \eqref{eqn:VJ_form_gen} simplifies to 
\begin{equation} \label{eqn:VJ_l=1}
U_J(P_I,P_J)=R-\frac{c_w}{\mu}\left[1+\frac{\rho(P_I+P_J)}{(1-\rho P_J)(1+\rho P_I)}\right],
\end{equation}
and \eqref{eqn:VI_form_gen} simplifies to 
\begin{equation} \label{eqn:VI_l=1}
U_I(P_I,P_J)=\left(R-\frac{c_w}{\mu}\right)\left(\frac{1-\rho P_J}{1+\rho P_I}\right) -C_I.
\end{equation}

Now let $P_I \in (0,1)$. By definition we have $(P_I,0)$ an equilibrium if and only if $U_I(P_I,0)=0\geq U_J(P_I,0)$. By rearrangement of \eqref{eqn:VI_l=1} with $P_J=0$, the unique $P_I$ satisfying $U_I(P_I,0)=0$ is $P_I^*$ in \eqref{eqn:defn_PI*}. To be an equilibrium, we need both $P_I^* \in (0,1)$ and $U_J(P_I^*,0)\leq 0$. We next find the region where both of these conditions hold.

First, recall the definition of $C_{I,1}^3(R)$ in Table \ref{tab:HR_all_custs}. Using the formula for $P_I^*$ in \eqref{eqn:defn_PI*}, it is easy to see that 
\begin{equation} \label{eqn:cond_p_I_0_a_prob}
P_I^* \in (0,1) \quad \text{if and only if} \quad C_{I,1}^3(R) < C_I < \frac{R\mu-c_w}{\mu}.
\end{equation}
Second, by \eqref{eqn:VJ_l=1}, we have
\begin{align*}
U_J(P_I^*,0)&=R-\frac{c_w}{\mu}\left[1+\frac{\rho P_I^*}{1+\rho P_I^*}\right] \\
&=R-\frac{c_w}{\mu}\left[2-\frac{1}{1+\rho P_I^*}\right] \\
&=R-\frac{2c_w}{\mu}+\frac{c_wC_I}{R\mu-c_w},
\end{align*}
which is non positive if any only if 
\begin{equation} \label{eqn:t_cond_p_I_0}
C_I\leq\frac{(R\mu-c_w)(2c_w-R\mu)}{\mu},
\end{equation}
where the right-hand side is $C_{J,0}^3(R)$ defined in \eqref{eqn:defn_CJO}. We require both \eqref{eqn:cond_p_I_0_a_prob} and \eqref{eqn:t_cond_p_I_0} to hold for $(P_I^*,0)$ to be the equilibrium. Since $n_e=1$, we have $2c_w-R\mu\in (0,1)$, so the upper bound in \eqref{eqn:t_cond_p_I_0} is binding. This completes the proof.
\end{proof}

It is clear from the formula in \eqref{eqn:defn_CJO} that $C_{J,0}^3(c_w/\mu)=0$; it is also easy to verify using \eqref{eqn:defn_CJO} that $C_{J,0}^3(R)$ coincides with both $C_{I,1}^3(R)$ and $C_{I,1}^{1,2}(R)$ at $R=c_w(2/\mu-1/(\mu+\lambda))$ when scenario 3 becomes scenario 2. Both of these properties can be seen in Figure \ref{fig:all}. Note also from Figure \ref{fig:all} that the proportion of the region $(C_{I,1}^3(R),C_{I,0}^3(R))$ with $P_J=0$ decreases from 1 to 0 as $R$ increases across scenario 3, a logical observation since the larger the reward, the more incentive to a blind customer to join over balk.

In a similar manner, the equilibrium probability $P_I^*$ in \eqref{eqn:defn_PI*} is clearly increasing in $R$ due to balking waning in attractiveness compared to joining. To find the equilibrium probabilities in the region where all three equilibrium probabilities are non zero, like scenario 2, one must solve $U_I(P_I,P_J)=U_J(P_I,P_J)=0$ with no known closed solution. However, in the example in Figure \ref{fig:all}, such an equilibrium involving all three pure strategies is quite rare; we observed the same behaviour in many other examples we tested. Hence, across Section \ref{sec:find_eq}, we have found closed expressions for the equilibrium probabilities in the majority of cases.

\section{Social Welfare as $R$ and $C_I$ Vary} \label{sec:SW}
Whilst in Section \ref{sec:find_eq} we examined the structure of the equilibrium, in this section, we will study utility at the equilibrium, which we call the \emph{social welfare.} In \cite{Hassin2017}, it is shown that social welfare is decreasing in $\kappa \equiv C_Ic_w/\mu$.  \blue{For the case $\rho<1$,} our analysis will compare directly the effects of marginal increases to $R$ and marginal decreases to $C_I$ on social welfare, motivated by a service provider who has the choice of the two to invest in.
We will show that increasing $R$ is only the most effective investment when everyone already joins the queue with certainty at the equilibrium. Otherwise, decreasing $C_I$ has a greater effect on social welfare. Further, whilst decreasing $C_I$ always increases social welfare, we show that, in the case where \emph{some} arrivals inspect the queue at equilibrium, increasing $R$ can sometimes, paradoxically, decrease social welfare. We draw parallels to Braess's paradox \cite{Braess1968}, where adding more roads to a network can slow down overall traffic flow.

Throughout, we will use the observation that if the equilibrium satisfies $P_X>0$ for any $X\in\{I,J,B\}$, then the social welfare is equal to the utility $U_X$ evaluated at the equilibrium. 
Therefore, if $P_B>0$ at the equilibrium, the social welfare is 0 since balkers always have utility 0; in Figure \ref{fig:all},
the yellow, brown and pink regions are those with $P_B>0$. If $P_B=0$, then the social welfare must be strictly positive, since there must be no incentive for an arrival to balk at the equilibrium. The corresponding regions in Figure \ref{fig:all} 
are green where $P_J=1$, blue where $P_I=1$, and cyan where $P_I,P_J<1$. We shall examine the social welfare in each of these regions individually in Sections \ref{sec:SW_green}, \ref{sec:SW_dblue} and \ref{sec:SW_lblue}, before summarising and providing a contour plot in Section \ref{sec:summarySW}.

\subsection{Green Region: Equilibrium $P_J=1$} \label{sec:SW_green}
The green region of Figure \ref{fig:all} is easy to analyse. The equilibrium is $P_J=1$, so we may use \eqref{eqn:VI_0,p_J} to deduce that the social welfare is  
\begin{equation} \label{eqn:SW_green_region}
U_J(0,1)=R-\frac{c_w}{\mu-\lambda}.
\end{equation}
We see that a unit increase in $R$ generates a unit increase in social welfare, a logical result since all arrivals blindly join at equilibrium, and any blind joiner benefits from an increase in $R$. 
Decreasing $C_I$ has no effect on social welfare since $C_I$ is too high for any arrival to inspect at equilibrium. Therefore, when $P_J=1$ at equilibrium, the service provider sees a better immediate rate of improvement by increasing $R$ than decreasing $C_I$. We will next show that the reverse is true in the blue and cyan regions, where the equilibrium satisfies $P_J<1$.

\subsection{Blue Region: Equilibrium $P_I=1$} \label{sec:SW_dblue}
In the blue region of Figure \ref{fig:all}, the equilibrium is $P_I=1$, so the social welfare is $U_I(1,0)$ expressed in \eqref{eqn:VI_1,*}. By its definition in \eqref{eqn:ne_def}, the optimal threshold $n_e$ is unaffected by decreasing $C_I$. It is therefore clear from \eqref{eqn:VI_1,*} that a unit decrease in $C_I$ leads to a unit increase in social welfare. 

As for increasing $R$, we must take into account that $n_e$ increases by 1 when $R$ passes above $(n_e+1)c_w/\mu$. 
However, since $C_I>0$, it follows from the properties of $C^{1,2}_{I,1}(R)$ discussed in Section \ref{sec:all_custs} (and is clearly visible in Figure \ref{fig:all}) that, in order for $n_e$ to change, we \emph{must} leave the  blue region. Therefore, any increase of $R$ remaining in the blue region does not increase $n_e$. Since $\rho<1$, inspection of \eqref{eqn:VI_1,*} shows that a unit increase in $R$ leads to an increase of $(1-\rho^{n_e})/(1-\rho^{n_e+1})<1$ in social welfare, which is precisely equal to $P[Q<n_e]$ when $P_I=1$.

We conclude that, in the blue region, decreasing $C_I$ sees a greater increase in social welfare than increasing $R$. This observation is logical. All arrivals inspect at equilibrium, and any inspector benefits from a reduction in $C_I$. However, only if an inspector joins the queue with probability $P[Q<n_e]$ will they benefit from an increase in $R$. As $n_e \rightarrow \infty$, we have $P[Q<n_e] \rightarrow 1$, and, since all inspectors now join the queue, increasing $R$ becomes just as beneficial as decreasing $C_I$.

\subsection{Cyan Region: Equilibrium $(1-P_J^*, P_J^*)$} \label{sec:SW_lblue}
In Section \ref{sec:somecusts}, we deduced that the equilibrium in the cyan region was $(1-P_J^*, P_J^*)$, where $y^*\equiv 1-P_J^*\rho$ is the positive root of the quadratic equation in \eqref{eqn:quad_simple} which we restate below: 
\begin{equation} \label{eqn:quad_simple_S5}
y^2\left(C_I\frac{\rho^{-n_e}-1}{1-\rho}\right)+y\left(C_I+R-\frac{c_wn_e}{\mu}\right)-\frac{c_w}{\mu}=0.
\end{equation}
Note that the coefficients in \eqref{eqn:quad_simple_S5} (and hence $P_J^*$ itself) depend on $C_I$ and $R$, which makes the analysis of this case more complicated than the green and blue regions on which the equilibrium probabilities were constant.

Recall from Section \ref{sec:somecusts} that the quadratic and linear coefficients in \eqref{eqn:quad_simple_S5} are strictly positive and the constant term is negative, leading to a unique positive root. To help our analysis, we first state a lemma about quadratics of this type, the proof of which can be found in Appendix \ref{append:proof:lem:gen_quad}.

\begin{lemma} \label{lem:gen_quad}
Consider a general quadratic equation with coefficients (in descending order of powers) given by $a>0$, $b>0$ and $c<0$. If either $a$ or $b$ is increased, the value of the positive root must decrease.
\end{lemma}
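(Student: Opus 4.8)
The plan is to analyze the positive root directly via the quadratic formula and show it is monotonically decreasing in each of the coefficients $a$ and $b$. Let the positive root be
\[
y^* = \frac{-b+\sqrt{b^2-4ac}}{2a}.
\]
Since $a>0$, $b>0$, and $c<0$, we have $-4ac>0$, so the discriminant $b^2-4ac$ exceeds $b^2$; hence $\sqrt{b^2-4ac}>|b|=b$, which confirms the numerator is positive and $y^*>0$, consistent with the existence of a unique positive root established earlier. The goal reduces to showing $\partial y^*/\partial a < 0$ and $\partial y^*/\partial b < 0$.

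For the dependence on $b$, I would fix $a$ and $c$ and differentiate. Writing $D=\sqrt{b^2-4ac}$, we have $\partial D/\partial b = b/D$, so
\[
\frac{\partial y^*}{\partial b} = \frac{-1 + b/D}{2a} = \frac{b-D}{2aD}.
\]
Since $D>b$ (as shown above) and $a,D>0$, this is strictly negative, giving the claim for $b$. For the dependence on $a$, rather than differentiating the quotient directly (which is messier because $a$ appears both outside and inside the radical), I would prefer an argument that avoids heavy computation. One clean approach is to rewrite the quadratic as $ay^2 = -by - c$; since $y^*>0$ and $-by^*-c = -by^*-c$, with $c<0$ and $b>0$ the right-hand side equals $|c|-by^*$, and one can exploit the implicit relation $f(y^*,a)=0$ where $f(y,a)=ay^2+by+c$. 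By the implicit function theorem, $\partial y^*/\partial a = -f_a/f_y = -(y^*)^2/(2ay^*+b)$, and since $y^*>0$ and the denominator $2ay^*+b>0$ (both terms positive), this is strictly negative.

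I would present the $a$-case via the implicit function theorem because it is the cleaner of the two and sidesteps the awkward algebra of differentiating the radical with $a$ inside it; alternatively both cases can be unified under the implicit-function framework, computing $f_y = 2ay^*+b > 0$ once and then reading off $\partial y^*/\partial a = -(y^*)^2/f_y < 0$ and $\partial y^*/\partial b = -y^*/f_y < 0$ directly from $f_a=(y^*)^2$ and $f_b=y^*$. This unified version is the most economical, as it requires only the observations that $y^*>0$ and $f_y>0$ at the positive root.

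The main obstacle, such as it is, is purely bookkeeping: one must verify that $f_y(y^*)>0$, i.e.\ that the positive root lies on the increasing branch of the parabola. This follows because the two roots straddle the vertex at $y=-b/(2a)<0$, so the positive root is necessarily to the right of the vertex where $f$ is increasing; equivalently, $f_y(y^*)=2ay^*+b$ is a sum of two strictly positive quantities. Once this sign is pinned down, both monotonicity conclusions are immediate, and no case analysis or delicate estimation is needed.
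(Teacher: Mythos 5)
Your proof is correct, but it follows a genuinely different route from the paper's. You argue by calculus: either differentiating the quadratic formula in $b$, or, more cleanly, applying the implicit function theorem to $f(y)=ay^2+by+c$, with the key observation that $f'(y^*)=2ay^*+b>0$ at the positive root (the positive root lies on the increasing branch of the parabola), giving $\partial y^*/\partial a=-(y^*)^2/f'(y^*)<0$ and $\partial y^*/\partial b=-y^*/f'(y^*)<0$. The paper instead uses a calculus-free comparison: writing $\widehat{y}$ for the original positive root, it notes that the perturbed quadratic $(a+\delta_a)y^2+(b+\delta_b)y+c$ is nonnegative at $y=\widehat{y}$ and strictly positive for every $y>\widehat{y}$ (since $a,b>0$ and $\widehat{y}>0$), so its unique positive root must lie in $[0,\widehat{y}]$. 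The paper's argument handles finite increments in one stroke with nothing beyond sign inspection; your derivative argument, as written, gives strict \emph{local} monotonicity, and to cover a finite increase you should add one sentence noting that the derivative remains strictly negative along the entire path (the hypotheses $a>0$, $b>0$, $c<0$ persist as $a$ or $b$ grows), so the root strictly decreases by the mean value theorem. In exchange, your approach buys quantitative rate information and a template that generalizes directly to roots of higher-degree polynomials or other parametrized equations, where a bespoke comparison argument would be harder to set up.
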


Applied to our quadratic in \eqref{eqn:quad_simple_S5}, Lemma \ref{lem:gen_quad} shows that the positive root $y^*=1-P^*_J\rho$ is decreased both by an increase in $C_I$ and by an increase in $R$ small enough to not change the threshold $n_e$. Therefore, decreasing $C_I$ leads to an decrease in $P_J^*$ and increasing $R$ leads to an increase in $P_J^*$. We now apply this result to the social welfare $U_I(1-P_J^*,P_J^*)=U_J(1-P_J^*,P_J^*)$.

\paragraph{Decreasing $C_I$}
By \eqref{eqn:VI_form_gen}, we see that the social welfare $U_I(1-P_J^*,P_J^*)$ is of the form $f(P_J^*)-C_I$ where $f$ does not involve $C_I$. By Lemma \ref{lem:mono_all_U}, we have $U_I(1-P_J,P_J)$ strictly decreasing in $P_J$ when $C_I$ is held constant. Since decreasing $C_I$ sees $P_J^*$ decrease, it follows that a unit decrease in $C_I$ sees a strictly greater than unit increase in social welfare.

This result is logical for the following reason. In the cyan region, there are queue inspectors at the equilibrium who benefit directly (via a unit increase in social welfare) from a unit reduction in $C_I$. Also, $C_I$ decreasing leads to $P_J^*$ decreasing, which reduces the traffic intensity, $\rho_U$, above the threshold (the traffic intensity below the threshold, $\rho_L$, is fixed at $\rho$). Therefore, inspectors who go on to join the queue \emph{additionally} benefit from lower congestion, leading to an overall greater than unit increase in social welfare. 

\paragraph{Increasing $R$ without Increasing $n_e$}
Similarly to the case of decreasing $C_I$, we note from \eqref{eqn:VJ_form_gen} that the social welfare $U_J(1-P_J^*,P_J^*)$ takes the form $R-g(P_J^*)$ where $g$ does not involve $R$. By Lemma \ref{lem:mono_all_U}, we have $U_J(1-P_J,P_J)$ strictly decreasing in $P_J$ when $R$ is held constant. Since increasing $R$ sees $P_J^*$ increase, it follows that a unit increase in $R$ sees a change in social welfare of no more than +1. The following result, proved in Appendix \ref{append:proof:prop:inc_rew_no_thres_cross}, shows that the change in social welfare belongs to the interval $(0,1)$.

\begin{theorem} \label{prop:inc_rew_no_thres_cross}
Any increase in the reward $R$ that does not increase the threshold $n_e$ leads to an increase in social welfare.
\end{theorem}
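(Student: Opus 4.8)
The plan is to work entirely in the cyan region, the only case not already settled by the explicit formulae of Sections \ref{sec:SW_green}--\ref{sec:SW_dblue}. Here the social welfare is $SW(R)=U_I(1-P_J^*,P_J^*)=U_J(1-P_J^*,P_J^*)$, where $y^*\equiv 1-P_J^*\rho$ is the positive root of the quadratic \eqref{eqn:quad_simple_S5}. Since the hypothesis fixes $n_e$, I hold $C_I$ and $n_e$ fixed and regard $SW$ as a smooth function of $R$ on each interval of constant $n_e$; it then suffices to show $dSW/dR>0$ there, after which any admissible increase in $R$ raises $SW$ by monotonicity. Write $\alpha$ and $\beta$ for the (strictly positive) quadratic and linear coefficients of \eqref{eqn:quad_simple_S5}, and set $\phi\equiv 1-\rho^{n_e}$ and $\psi\equiv(1-\rho)\rho^{n_e}$, so that the useful identity $\alpha\psi=C_I\phi$ holds.

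Next I would differentiate. Using \eqref{eqn:VI_form_gen} with $\rho_L=\rho$ and $\rho_U=P_J\rho$, one has $U_I=y^*(R\phi-B)/(y^*\phi+\psi)-C_I$, with $B$ collecting the $R$-free terms. Implicit differentiation of \eqref{eqn:quad_simple_S5} gives $dy^*/dR=-y^*/(2\alpha y^*+\beta)<0$ (the sign is also guaranteed by Lemma \ref{lem:gen_quad}), and the quadratic relation $\alpha y^{*2}+\beta y^*=c_w/\mu$ collapses the denominator to $2\alpha y^*+\beta=\alpha y^*+c_w/(\mu y^*)$. Substituting into $dSW/dR=\partial_R U_I+\partial_{y^*}U_I\cdot(dy^*/dR)$ and clearing the positive denominator $(y^*\phi+\psi)^2(2\alpha y^*+\beta)$, the sign of $dSW/dR$ reduces to the single inequality $\phi(y^*\phi+\psi)(2\alpha y^*+\beta)>(R\phi-B)\psi$, which I denote $(\star)$.

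The decisive step is to verify $(\star)$. I would eliminate $R$ using the quadratic (which yields $R=c_w/(\mu y^*)-\alpha y^*-C_I+c_w n_e/\mu$) together with $\alpha\psi=C_I\phi$, and then clear the factor $\mu y^*$. After the cancellations, every $y^*$-dependent term recombines into the perfect square $\alpha(\phi y^*+\psi)^2$, while the surviving $R$- and $y^*$-free remainder simplifies to $(c_w/\mu)[\,1-(n_e+1)\rho^{n_e}+n_e\rho^{n_e+1}\,]$; that is, $(\star)$ is equivalent to $\alpha(\phi y^*+\psi)^2+(c_w/\mu)[\,1-(n_e+1)\rho^{n_e}+n_e\rho^{n_e+1}\,]>0$. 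The first term is nonnegative because $\alpha>0$, and the bracket is strictly positive for $\rho\in(0,1)$, since $1-(n_e+1)\rho^{n_e}+n_e\rho^{n_e+1}=(1-\rho)^2\sum_{i=0}^{n_e-1}(i+1)\rho^i$ (equivalently, it is $(1-\rho)$ times the quantity $b$ of \eqref{eqn:EQ_gen_supp} with $\rho_L=\rho$, and is strictly decreasing in $\rho$ from $1$ at $\rho=0$ to $0$ at $\rho=1$). Hence $dSW/dR>0$.

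I expect the main obstacle to be the algebra of the third paragraph: organising the substitution so that the $y^*$-terms recombine into $\alpha(\phi y^*+\psi)^2$ is where the argument could become opaque, and the two simplifications $2\alpha y^*+\beta=\alpha y^*+c_w/(\mu y^*)$ and $\alpha\psi=C_I\phi$ are precisely what make it tractable. It is worth noting that this route needs neither the sign of $R\phi-B$ nor any bound on $SW$: the perfect-square structure resolves the sign unconditionally, and combined with the bound $dSW/dR<1$ established in the main text via the $U_J$ representation, it pins the marginal effect of $R$ to the interval $(0,1)$.
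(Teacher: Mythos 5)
Your proof is correct --- I checked the algebra in detail, and the key identity does hold: after eliminating $R$ via $R=c_w/(\mu y^*)-\alpha y^*-C_I+c_wn_e/\mu$ and using $\alpha\psi=C_I\phi$, the left-hand side of your inequality $(\star)$ minus its right-hand side collapses exactly to $\alpha(\phi y^*+\psi)^2+(c_w/\mu)\bigl[1-(n_e+1)\rho^{n_e}+n_e\rho^{n_e+1}\bigr]$, whose positivity follows from $1-(n_e+1)\rho^{n_e}+n_e\rho^{n_e+1}=(1-\rho)^2\sum_{i=0}^{n_e-1}(i+1)\rho^i>0$. However, your route is genuinely different from the paper's. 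Both proofs reduce to the cyan region (treating the $P_B>0$ regions and Sections \ref{sec:SW_green}--\ref{sec:SW_dblue} cases in one line) and both show the welfare derivative in $R$ is strictly positive at fixed $C_I$ and $n_e$; but the paper substitutes the \emph{explicit} quadratic-formula root $1-P_J^*\rho=(1-\rho)\bigl(\sqrt{(R+d)^2+f}-R-d\bigr)/k$ into $U_I$, differentiates the resulting radical expression \eqref{eqn:V_I_nobalk_at_equil} by the quotient rule, and then establishes positivity through a chain of estimates (replacing $g$ by a smaller $\widehat{g}$, then a monotonicity-in-$n_e$ argument), whereas you keep the root implicit, differentiate the quadratic \eqref{eqn:quad_simple_S5} implicitly, and use the two relations $2\alpha y^*+\beta=\alpha y^*+c_w/(\mu y^*)$ and $\alpha\psi=C_I\phi$ to obtain an \emph{exact} sign decomposition rather than a bound. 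What your approach buys is the complete avoidance of square roots and inequalities-by-comparison: the sign of $dSW/dR$ is resolved by a perfect square plus a classical positive quantity, which is shorter and structurally more transparent. What the paper's approach buys is the closed-form expression \eqref{eqn:V_I_nobalk_at_equil} for equilibrium welfare itself, which the paper reuses when discussing the open problem of welfare drops at threshold crossings in Section \ref{sec:SW_lblue}. One small caution: your opening sentence says the cyan region is ``the only case not already settled by the explicit formulae of Sections \ref{sec:SW_green}--\ref{sec:SW_dblue}''; strictly, the regions with $P_B>0$ are settled not by those formulae but by the observation that welfare is identically $0$ there (the paper devotes a clause to this), so you should retain that one line for completeness.
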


A less-than-unit increase in social welfare from increasing $R$ in the cyan region is again logical. There are blind joiners at the equilibrium who benefit directly (via a unit increase in social welfare) from a unit increase in $R$. However, increasing $R$ leads to $P_J^*$ increasing, which increases $\rho_U$ whilst keeping $\rho_L=\rho$. Therefore, blind joiners also face increased congestion, which detriments social welfare. Theorem \ref{prop:inc_rew_no_thres_cross} shows that the beneficial increase in reward received is more significant than the extra congestion.

\paragraph{Increasing $R$ to increase $n_e$}
We now consider increases to $R$ in the cyan region which do increase the optimal threshold $n_e$. We first examine the effect on $P_J^*$ at the moment $n_e$ increases.

\begin{lemma} \label{prop:equil_prob_cross_threshold}
When the equilibrium satisfies $P_B=0$ and $P_J,P_I<1$, for any $x \in \{2,3,\ldots\}$, there exists $\epsilon_x>0$ such that, for any $\epsilon \in (0,\epsilon_x)$, increasing $R$ from $xc_w/\mu$ to $xc_w/\mu + \epsilon$ decreases the equilibrium probability $P_J^*$.
\end{lemma}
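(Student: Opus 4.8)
The plan is to track the unique positive root of the equilibrium quadratic \eqref{eqn:quad_simple} as $R$ crosses the breakpoint $R=xc_w/\mu$, where $n_e$ jumps from $x-1$ (at and just below the breakpoint) to $x$ (just above). Writing $y=1-P_J^*\rho$, a decrease in $P_J^*$ is the same as an increase in $y$, so it suffices to show that crossing the breakpoint \emph{increases} the positive root. Since, within a fixed-$n_e$ regime, the coefficients of \eqref{eqn:quad_simple} and hence its positive root depend continuously on $R$, I would first reduce to comparing the two roots exactly at $R=xc_w/\mu$: the value $P_J^-$ governed by \eqref{eqn:quad_simple} with $n_e=x-1$, and the right-limit value governed by the same equation with $n_e=x$. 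If the right-limit root strictly exceeds the $n_e=x-1$ root, then continuity of the $n_e=x$ root in $R$ gives $P_J^*(xc_w/\mu+\epsilon)<P_J^-$ for all sufficiently small $\epsilon$, which is the claim.

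At $R=xc_w/\mu$ the linear coefficient $C_I+R-c_wn_e/\mu$ of \eqref{eqn:quad_simple} equals $C_I+c_w/\mu$ when $n_e=x-1$ and $C_I$ when $n_e=x$. Denote the corresponding left-hand sides by $F_A(y)$ and $F_B(y)$; each is an upward parabola with constant term $-c_w/\mu<0$, hence has a single positive root, $y_A$ and $y_B$ respectively. Subtracting, the quadratic coefficients differ by $C_I\rho^{-x}$ and the linear coefficients by $-c_w/\mu$, so
\begin{equation*}
F_B(y)-F_A(y)=C_I\rho^{-x}y^2-\frac{c_w}{\mu}y=y\left(C_I\rho^{-x}y-\frac{c_w}{\mu}\right).
\end{equation*}
Evaluating at $y_A$, where $F_A(y_A)=0$, gives $F_B(y_A)=y_A(C_I\rho^{-x}y_A-c_w/\mu)$, and since $y_B$ is the only positive root of the upward parabola $F_B$, we have $y_A<y_B$ if and only if $F_B(y_A)<0$, i.e.\ if and only if $C_I\rho^{-x}y_A<c_w/\mu$.

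It remains to establish $C_I\rho^{-x}y_A<c_w/\mu$, and here the cyan-region hypothesis $P_J^-<1$ enters twice. First, $P_J^-<1$ is exactly $y_A>1-\rho$. Second, because $F_A$ is an upward parabola with $F_A(0)<0$ and single positive root $y_A$, the hypothesis $y_A>1-\rho$ is equivalent to $F_A(1-\rho)<0$; substituting $y=1-\rho$ and simplifying shows this is the same as $C_I<c_w\rho^x/(\mu(1-\rho))$. Feeding the bound $y_A>1-\rho$ into the relation $c_w/\mu=y_A(a_Ay_A+b_A)$ obtained from $F_A(y_A)=0$, where $a_A,b_A$ are the quadratic and linear coefficients of $F_A$, and then using $C_I<c_w\rho^x/(\mu(1-\rho))$, would give $a_Ay_A+b_A>C_I\rho^{-x}$, which rearranges to $C_I\rho^{-x}y_A<c_w/\mu$, as needed.

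The main obstacle is that Lemma \ref{lem:gen_quad} cannot be applied directly: crossing the breakpoint raises the quadratic coefficient of \eqref{eqn:quad_simple} (which alone would lower the root) but simultaneously lowers the linear coefficient (which alone would raise it), so the two effects compete and the sign of the net change is not immediate. The difference-of-quadratics evaluation above resolves this competition, and the decisive quantitative input is the cyan-region bound $C_I<c_w\rho^x/(\mu(1-\rho))$, which is precisely the inequality that forces $F_B(y_A)<0$. A minor point to handle carefully is verifying that this bound is genuinely equivalent to the cyan hypothesis $P_J^-<1$ regardless of which scenario the breakpoint falls in; the computation of $F_A(1-\rho)$ confirms this equivalence without any case split.
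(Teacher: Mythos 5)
Your proof is correct, and its skeleton coincides with the paper's (Appendix \ref{append:proof:prop:equil_prob_cross_threshold}): both arguments compare the positive roots of the $n_e=x-1$ and $n_e=x$ quadratics across the breakpoint, and both ultimately rest on the same quantitative bound $C_I<c_w\rho^x/(\mu(1-\rho))=c_w\rho^x/(\mu-\lambda)$. The handling of $\epsilon$ differs only in bookkeeping: the paper keeps $\epsilon$ inside the second quadratic $g$, locates the nonzero intersection $\widehat{y}=\rho^x(c_w/\mu-\epsilon)/C_I$ of $f$ and $g$, and shows $f(\widehat{y})>0$ for small $\epsilon$---note that at $\epsilon=0$ this $\widehat{y}$ is exactly your threshold $c_w\rho^x/(C_I\mu)$, and $f(\widehat{y})>0$ is exactly your condition $F_B(y_A)<0$---whereas you pass to the $\epsilon\downarrow 0$ limit quadratic $F_B$ and recover small $\epsilon$ by continuity of the root, which is equally valid (both proofs share the implicit assumption that the cyan form of the equilibrium persists just above the breakpoint). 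The genuine difference is how the decisive bound on $C_I$ is obtained. The paper uses the half of the hypothesis phrased as $P_I>0$ and imports the bound from the equilibrium-region machinery: Table \ref{tab:HR_no_custs} together with the monotonicity and continuity properties of $C_{I,0}^{1}$ and $C_{I,0}^{2,3}$, including a cross-scenario comparison, to conclude $C_I<C_{I,0}^{1}(xc_w/\mu)=c_w\rho^x/(\mu-\lambda)$. You use the equivalent half $P_J^-<1$ (equivalent because $P_B=0$ forces $P_I+P_J=1$) and extract the same bound intrinsically from the quadratic itself, via $y_A>1-\rho\iff F_A(1-\rho)<0\iff C_I<c_w\rho^x/(\mu(1-\rho))$; I verified $F_A(1-\rho)=(1-\rho)C_I\rho^{-(x-1)}-\rho c_w/\mu$ and the chain $a_Ay_A+b_A>C_I\rho^{-(x-1)}+c_w/\mu>C_I\rho^{-x}$, so your sketched final step does close. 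Your observation that Lemma \ref{lem:gen_quad} is inapplicable (the quadratic coefficient rises while the linear one falls, so the effects compete) is also accurate and correctly motivates the direct comparison. What each route buys: yours is self-contained and scenario-free, never touching the region-boundary functions of Section \ref{sec:find_eq}; the paper's reuses already-established structure and makes explicit that the binding constraint is precisely the boundary of the region where $P_I>0$ at equilibrium.
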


Lemma \ref{prop:equil_prob_cross_threshold}, whose proof can be found in Appendix \ref{append:proof:prop:equil_prob_cross_threshold}, shows that when $R$ passes $xc_w/\mu$ and $n_e$ increases from $x$ to $x+1$, some blind joiners are replaced by inspectors at the equilibrium, which reduces queue congestion. However, the increase in $n_e$ means that any inspector now joins a queue of length $x$ instead of balking, which increases queue congestion.
In the next section, we will show in Figure \ref{fig:contour} that the latter can be more significant. In other words, the social welfare can decrease when $R$ is increased and a threshold is crossed because of increased queue congestion. Hence Theorem \ref{prop:inc_rew_no_thres_cross} cannot be extended to the case where $n_e$ increases. \blue{We summarise in the following.

\begin{corollary} \label{corol:incRdecSW}
It is possible for an increase in the reward $R$ to decrease social welfare if the threshold $n_e$ increases as a result. 
\end{corollary}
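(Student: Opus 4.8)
The plan is to prove this existence statement by exhibiting a single threshold crossing at which social welfare strictly drops. Since the claim is existential, it suffices to locate parameters $\lambda,\mu,c_w,C_I$ and an integer $x\ge 2$ for which the equilibrium lies in the cyan region (so $P_B=0$ and $P_I,P_J\in(0,1)$) on both sides of $R=xc_w/\mu$, and then to show that the social welfare is smaller just after the threshold than just before it. First I would fix $C_I>0$ small enough that, for $R$ in a neighbourhood of $xc_w/\mu$, the equilibrium stays in the cyan region; this is possible because the bounding functions of Sections \ref{sec:no_custs}--\ref{sec:somecusts} (namely the requirement $C_{I,1}^{1,2}(R)<C_I<C_{I,0}^{1}(R)$ in scenario 1) are strictly positive and continuous away from their breakpoints. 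On this region the social welfare equals $U_J(1-P_J^*,P_J^*)$ from \eqref{eqn:VJ_form_gen}, with $y^*\equiv 1-P_J^*\rho$ the positive root of the quadratic \eqref{eqn:quad_simple_S5}.

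Second, I would compute the two one-sided limiting quadratics as $R\to xc_w/\mu$. As $R\uparrow xc_w/\mu$ the threshold is $n_e=x-1$ and the linear coefficient of \eqref{eqn:quad_simple_S5} tends to $C_I+c_w/\mu$, whereas as $R\downarrow xc_w/\mu$ the threshold jumps to $n_e=x$ and the linear coefficient tends to $C_I$. Each limiting quadratic then has explicit coefficients depending only on $\rho,C_I,c_w,\mu$ and its respective $n_e$, so its positive root $y^*$ is obtained in closed form from the quadratic formula. Substituting each root, together with the matching value of $n_e$, into \eqref{eqn:VJ_form_gen} yields the left and right limiting values of the social welfare at the crossing.

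Third, I would compare these two limiting values and exhibit a parameter choice for which the right value is strictly smaller, which is exactly the behaviour illustrated numerically in Figure \ref{fig:contour}. Intuitively, by Lemma \ref{prop:equil_prob_cross_threshold} the crossing decreases $P_J^*$ and hence lowers $\rho_U=P_J^*\rho$ and the congestion it generates; but the increase of $n_e$ forces inspectors to join queues of length up to $x$ rather than balking, extending the rate-$\rho$ portion of the stationary distribution \eqref{eqn:steadystate} and raising $E[Q]$. The net effect on $U_J$ in \eqref{eqn:VJ_form_gen} can therefore be negative. Combined with the within-region monotonicity guaranteed by Theorem \ref{prop:inc_rew_no_thres_cross}, a strictly negative jump at the threshold yields a genuine net decrease in social welfare for a suitable increase of $R$, proving the corollary.

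The main obstacle is that the two limiting quadratics differ in both the degree-dependent coefficient $(\rho^{-n_e}-1)/(1-\rho)$ and the linear term, so their positive roots are not simply related, and the change in $P_J^*$ must be tracked simultaneously with the change in the queue-length distribution. Establishing a strictly negative jump purely analytically would require a delicate inequality balancing the congestion relief from a smaller $P_J^*$ against the congestion added by a larger $n_e$. The cleanest rigorous route is therefore to record one fully worked numerical instance, as depicted in Figure \ref{fig:contour}, for which every quantity is explicit and the strict inequality can be verified directly.
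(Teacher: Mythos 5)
Your proposal takes essentially the same route as the paper: the paper itself establishes this corollary not analytically but by pointing to the numerical example of Figure \ref{fig:contour}, where the social welfare contours exhibit a strict drop whenever $R$ crosses a threshold line $xc_w/\mu$ inside the cyan region, and it explicitly leaves the analytical comparison of the two one-sided limits (your two limiting quadratics, which are exactly the polynomials $f$ and $g$ in the paper's proof of Lemma \ref{prop:equil_prob_cross_threshold}) as an open problem. One small correction to your setup: the equilibrium remains in the cyan region on both sides of the crossing not for $C_I$ ``small enough'' but for $C_I$ strictly between the right-hand tooth height $\lim_{\delta\downarrow 0}C_{I,1}^{1,2}(xc_w/\mu+\delta)$ and $C_{I,0}^{1}(xc_w/\mu)$, since for $C_I$ below the tooth height the equilibrium just after the threshold is $P_I=1$ (the blue region).
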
}

In fact, our numerical experiments failed to find a single case where social welfare does not fall as $R$ increases over a threshold, suggesting a \blue{stronger} analytical result \blue{than Corollary \ref{corol:incRdecSW}} may be possible. However, despite our best efforts, comparing the complex expression for $U_I(1-P_J^*,P_J^*)$ (found in equation \eqref{eqn:V_I_nobalk_at_equil}) either side of the threshold is difficult due to the increase in $n_e$ when the threshold is passed. We therefore leave such an investigation as further work.

\subsection{Summary and Social Welfare Sensitivity} \label{sec:summarySW}
In this subsection, we will summarise our results on the effects of marginal changes to $R$ and $C_I$ on social welfare, before discussing their practical implications. First, we reproduce Figure \ref{fig:all} as a contour plot of social welfare in Figure \ref{fig:contour}, with scales adjusted to be the same for each axis. 

\begin{figure}
\centering
\includegraphics[width=1\textwidth]{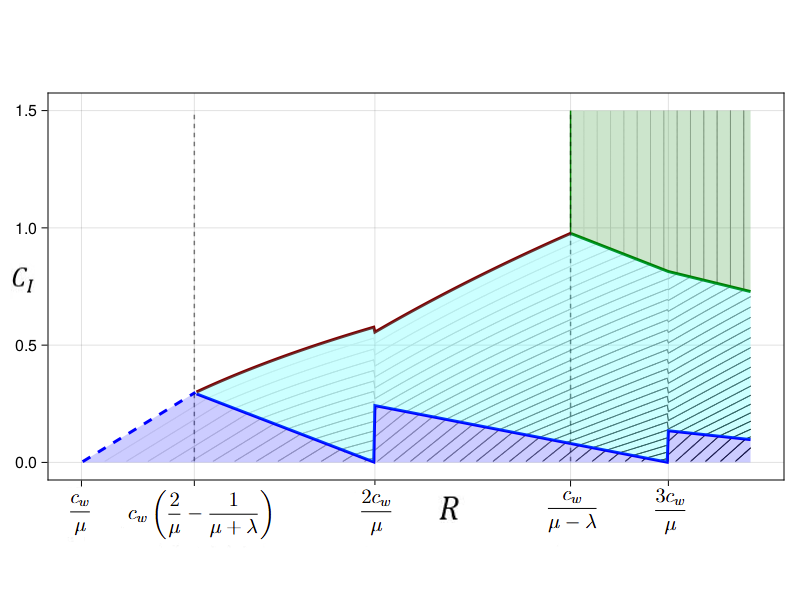} 
\caption{(Best viewed in colour) For the example in Figure \ref{fig:all} with $\lambda=0.5, \mu=0.8$ and $c_w=1$, a contour plot showing changes to social welfare as $R$ and $C_I$ vary. As in Figure \ref{fig:all}, the increments of $c_w/\mu$ on the $R$-axis demonstrate the regions where $n_e$ is constant. Only those coloured regions where the social welfare is non zero (namely green, blue and cyan) are retained from Figure \ref{fig:all}. The darker the contour lines, the greater the social welfare.} \label{fig:contour}
\end{figure}

In the green region, the contour lines are vertical, which, as deduced in Section \ref{sec:SW_green}, shows that only increasing $R$ increases social welfare. In the blue and cyan regions, when a threshold line $xc_w/\mu$ is not crossed, the contour lines are slanted forwards, indicating that both increasing $R$ and decreasing $C_I$ increase social welfare. Since the axes have the same scale, the contour lines having a \blue{slope} of less than 45 degrees shows that a marginal decrease in $C_I$ has the greater effect, as proven in Sections \ref{sec:SW_dblue} and \ref{sec:SW_lblue}.

In the blue region, as also shown in Section \ref{sec:SW_dblue}, we see that as the threshold $n_e$ increases, the \blue{slope of the contour lines} approaches 45 degrees, where changes in $R$ and $C_I$ affect social welfare at the same rate. We see the same effect in the cyan region, \blue{a phenomenon explained in Section \ref{sec:SW_lblue}: decreasing $C_I$ in the cyan region decreases the ratio $P^*_J/P^*_I$ of blind joiners to inspectors (reducing congestion) whilst increasing $R$ increases the same ratio (increasing congestion). As a result, decreasing $C_I$ is more beneficial than increasing $R$.}

However, as $n_e$ increases, a larger proportion of inspectors will join the queue, so the behaviour of inspectors `converges' to that of blind joiners. Therefore, any change to $P^*_J/P^*_I$ has a smaller and smaller effect on congestion as $n_e$ grows.

We also see in Figure \ref{fig:contour} that, for the same $n_e$, the contour \blue{slopes} are closer to 45 degrees in the blue region than the cyan region. This effect was prevalent in all numerical examples we investigated, and shows that, for fixed $n_e$, the extra benefit in decreasing $C_I$ over increasing $R$ is greater in the cyan region than in the blue. This indicates that the changes in congestion which favour decreasing $C_I$ in the cyan region are more significant than increases to $R$ not benefiting inspectors who balk in the blue region. 

Finally, we note that, whenever a contour line crosses a line $R=xc_w/\mu$, there is a sudden drop in the value of social welfare. As discussed in Section \ref{sec:SW_lblue}, this effect occurs because the threshold $n_e$ increases so a higher proportion of inspectors now join the queue, and was, in fact, present in all numerical examples we investigated.

We conclude with some practical implications of the above findings to a service provider wishing to improve social welfare through investment. First, the current values of $C_I$ and $R$ should be located on Figure \ref{fig:contour}.
If in the green region, for quick results, investment should go towards increasing $R$, since decreasing $C_I$ has no immediate benefit. However, if increasing $R$ and decreasing $C_I$ come at the same cost, a longer-term investment may be better suited on $C_I$, as if the cyan region is reached, the additional benefit of reduced congestion is attained by decreasing $C_I$. 

In the cyan region, Figure \ref{fig:contour} shows that increasing $R$ can decrease social welfare when the optimal threshold increases, suggesting that the service provider should be cautious whenever increasing $R$ here. Yet, in real life, it is unlikely that all queue inspectors will follow the exact optimal policy and all begin to join a queue of length $x$ at the precise moment that $R$ passes $xc_w/\mu$. In practice, the increases in congestion due to more inspectors joining the queue will come about much more gradually. Nevertheless, this effect, combined with an increase in blind joiners at equilibrium, does render increasing $R$ less beneficial than decreasing $C_I$ in the cyan region, assuming that both investments can be made at the same cost.

In the blue region, decreasing $C_I$ is also more beneficial than increasing $R$, albeit less so than in the cyan region. Yet, $C_I$ is already quite low here, and it may be practically difficult to decrease it further. For example, no matter how well designed and visible the queue length is on a service provider's website, an arrival will still have to wait and/or pay to load the webpage, so some inspection cost is unavoidable. Therefore, increasing $R$ for a smaller benefit could be the only option here.

Outside the green, blue and cyan regions of Figure \ref{fig:contour}, there is no immediate benefit to increasing $R$ or decreasing $C_I$; the social welfare will remain at 0. Therefore, the service provider should investigate which action will fastest (or cheapest) take them into the green, blue and cyan regions where social benefit improvements can be made.

\section{Conclusion and Future Research} \label{sec:conclusion}
The work \cite{Hassin2017} proposes a novel variation on an M/M/1 queue where customers can pay to learn the queue length, a model of increasing relevance due to more and more queue operators providing this information on websites or apps. Our work builds on the solutions of \cite{Hassin2017} by refining the classification of equilibrium types, and, in many cases, finding simpler closed-form expressions for the equilibrium probabilities. We also reframe the results in terms of the inspection cost and service reward, the two quantities most easily influenced by a change in business strategy. Our finding that altering inspection cost often influences social welfare more than altering service reward not only shows the importance of information in queueing, but is also of great practical use to the queue operator. If the aim is to maximise social welfare, it motivates making queue-length information easily accessible, possibly by displaying it on a homepage, or building a website quick to load on portable devices.

We now discuss future work. As in most studies on strategic queueing, with very few exceptions (e.g., \cite{Kerner2011}), we have assumed the basic M/M/1-type model. Considering non-Markovian queueing models and in particular queueing models with a general service time distribution is very desirable, but at the same time looks to be extremely challenging analytically. \blue{In addition, we consider a time-stationary arrival rate $\lambda$ always less than the service rate $\mu$. In practice, the arrival rate often varies throughout different periods of the day, rising---possibly above the service rate---at certain peak times. Therefore, an extension to a time-dependent arrival rate would certainly benefit the literature. However, no matter the choice of model or arrival rate, in real life} customers are often ignorant of the true system parameters. Thus, it is very important to develop learning schemes \cite{Massaro2019,Qian2019} or to analyse various behavioural approaches \cite{Young1993}. 

As discussed at the end of Section \ref{sec:SW_lblue}, we leave open the question of whether the social welfare \emph{always} drops in value at the moment when increasing the reward causes an increase in the optimal threshold, \blue{which would strengthen Corollary \ref{corol:incRdecSW}}. Whilst we tried hard to find either a proof or counter-example, we were not able to succeed in these directions.

Another formulation related to our model involves admission control with incomplete queue-length information---the works \cite{Kuri1995, Kuri1997} study this model in discrete time with a $k$ time step delay in queue-length observation. In contrast to our game-theoretic formulation, the admission-control problem is significantly harder due to the complexity of the state space. The work \cite{lin2003} adds multiple servers to admission control, but the state space is simplified by removing the queue, with an arrival forced to leave the system (for a cost) if all servers are busy.
The elaboration of the optimal control, reinforcement learning or multi-agent learning algorithms is possible and very important for practice.

\clearpage

\begin{appendices}
\section{Proofs for Section \ref{sec:no_custs}} \label{append:proofs_no_custs}
\paragraph{Proof of properties of $C_{I,0}^1(R)$}

\begin{proof}
It follows from the definition of $n_e$ in \eqref{eqn:ne_def} that $C_{I,0}^1(R)$ is piecewise \blue{differentiable} in $R$ with the stated breakpoints. Observation of the definition of $C_{I,0}^1(R)$ in Table \ref{tab:HR_no_custs} shows that each \blue{piecewise} component is linear and strictly decreasing. The continuity and overall strictly decreasing property of $C_{I,0}^1(R)$ will hence follow from showing that $C_{I,0}^1(R)$ is continuous at all of its breakpoints.

For any $x \in \{2,3,\ldots\}$, if $R=xc_w/\mu$ then $n_e=x-1$, so we have
\begin{equation} \label{eqn:t_0_breakpoint}
C_{I,0}^1(xc_w/\mu)=\rho^{x-1}c_w\left[\frac{1}{\mu-\lambda}-\frac{1}{\mu}\right].
\end{equation}
As $R$ tends to $xc_w/\mu$ from below, we still have $n_e=x-1$, so the left continuity of $C_{I,0}^1$ at $xc_w/\mu$ is immediate. To show right continuity, we consider $R=xc_w/\mu+\delta$ for $\delta \in (0,c_w/\mu)$. Since $n_e=x$ on this range, we have
\begin{equation} \label{eqn:t_0_breakpoint_limit}
\lim_{\delta \downarrow 0} C_{I,0}^1(xc_w/\mu+\delta)= \lim_{\delta \downarrow 0} \rho^{x}c_w\left[\frac{1}{\mu-\lambda}-\delta\right]  = \frac{\rho^{x}c_w}{\mu-\lambda}.
\end{equation}
Since
$$\frac{\rho}{\mu-\lambda}=\frac{1}{\mu-\lambda}-\frac{1}{\mu},$$
it follows that \eqref{eqn:t_0_breakpoint} and \eqref{eqn:t_0_breakpoint_limit} are equal and that $C_{I,0}^1$ is right continuous at $xc_w/\mu$.

To see that $C_{I,0}^1(R)$ is strictly positive for all $R$, note that we have
$$C_{I,0}^1(R)> \rho^{n_e} \left[ \frac{c_w(n_e+1)}{\mu}-R\right]\geq 0,$$
where the second inequality follows from the definition of $n_e$.

Finally, to prove the limit, note that $n_e\rightarrow \infty$ as $R\rightarrow \infty$. Further, by the definition of $n_e$, we have $c_wn_e/\mu-R<0$, so
$$C_{I,0}^1(R) < \frac{\rho^{n_e}c_w}{\mu-\lambda} \downarrow 0 \quad \text{as} \quad n_e \rightarrow \infty.$$ 
The proof is completed.
\end{proof}

\paragraph{Proof of properties of $C_{I,0}^{2,3}(R)$:}

\begin{proof}
It is clear from the definition of $n_e$ in \eqref{eqn:ne_def} that $C_{I,0}^{2,3}(R)$ is piecewise \blue{differentiable} in $R$ with the stated breakpoints. Since we assume $R\mu >c_w$, it is further clear from the definition of $C_{I,0}^{2,3}(R)$ in Table \ref{tab:HR_no_custs} that $C_{I,0}^{2,3}(R)$ is strictly positive and strictly increasing in $R$ for any fixed value of $n_e$. The continuity and overall strictly increasing property of $C_{I,0}^{2,3}(R)$ will hence follow from showing that $C_{I,0}^{2,3}(R)$ is continuous at all of its breakpoints.

For any $x \in \{2,3,\ldots\}$, if $R=xc_w/\mu$ then $n_e=x-1$, so we have
\begin{align} 
C_{I,0}^{2,3}(xc_w/\mu)&=\frac{(x-1)c_w}{\mu}\left(1-\frac{1}{x}\right)^{x-1} \nonumber \\
&=\frac{xc_w}{\mu}\left(1-\frac{1}{x}\right)^{x}. \label{eqn:CI023proof}
\end{align}
As $R$ tends to $xc_w/\mu$ from below, we still have $n_e=x-1$, so the left continuity of $C_{I,0}^{2,3}$ at $xc_w/\mu$ is immediate. To show right continuity, we consider $R=xc_w/\mu+\delta$ for $\delta \in (0,c_w/\mu)$. Since $n_e=x$ on this range, we have
\begin{equation*}
\lim_{\delta \downarrow 0} C_{I,0}^{2,3}(xc_w/\mu+\delta)= \lim_{\delta \downarrow 0} \frac{xc_w}{\mu}\left(1-\frac{c_w}{xc_w+\delta \mu}\right)^{x} = \frac{xc_w}{\mu}\left(1-\frac{1}{x}\right)^{x},
\end{equation*}
which is equal to \eqref{eqn:CI023proof}, so the right continuity of $C_{I,0}^{2,3}$ at $xc_w/\mu$ follows.
\end{proof}

\section{Proofs for Section \ref{sec:all_custs}} \label{append:proofs_all_custs}

\paragraph{Proof of properties of $C_{I,1}^{1,2}(R)$:}

\begin{proof}
As discussed previously, it is clear from the definition of $n_e$ in \eqref{eqn:ne_def} that $C_{I,1}^{1,2}(R)$ is piecewise \blue{differentiable} in $R$ with the stated breakpoints. Observation of the definition of $C_{I,1}^{1,2}(R)$ in Table \ref{tab:HR_all_custs} shows that each component is linear and strictly decreasing. The positivity of $C_{I,1}^{1,2}$ and the location of its zeros follows from the inequality 
$$\frac{c_w(n_e+1)}{\mu}-R \geq 0,$$
which again arises from the definition of $n_e$ in \eqref{eqn:ne_def}.

Finally, for $0<\delta \leq c_w/\mu$, we have $n_e=x$ when $R=xc_w/\mu +\delta$, so
$$\frac{(1-\rho)\rho^x}{(1-\rho^{x+1})}\left[\frac{c_w}{\mu}-\delta\right] \rightarrow \frac{c_w(1-\rho)\rho^x}{\mu(1-\rho^{x+1})} \quad \text{as} \quad \delta \downarrow 0.$$
Since $0<\rho<1$, the sequence 
$$\left\{\frac{c_w(1-\rho)\rho^x}{\mu(1-\rho^{x+1})}: x=2,3\ldots\right\}$$
is strictly decreasing with limit 0, completing the proof.
\end{proof}

\section{Proofs for Section \ref{sec:SW}}
\subsection{Proof of Lemma \ref{lem:gen_quad}} \label{append:proof:lem:gen_quad}
\begin{proof}
Since $ac<0$ is the product of the roots, the quadratic has a unique positive root, $\widehat{y}$. For any $\delta_a,\delta_b \geq 0$, we have
$$a\widehat{y}^2+b\widehat{y}+c=0, \quad (a+\delta_a)\widehat{y}^2+b\widehat{y}+c\geq 0 \quad \text{and} \quad a\widehat{y}^2+(b+\delta_b)\widehat{y}+c\geq 0.$$
Since $a, b > 0$, we have
\begin{equation} \label{eqn:proof_two_quads}
(a+\delta_a)y^2+by+c > 0 \quad \text{and} \quad ay^2+(b+\delta_b)y+c> 0
\end{equation}
for any $y \in (\widehat{y},\infty)$. Therefore, the two quadratics in \eqref{eqn:proof_two_quads} must have their unique positive root in $[0,\widehat{y}]$, completing the proof. 
\end{proof}

\subsection{Proof of Theorem \ref{prop:inc_rew_no_thres_cross}} \label{append:proof:prop:inc_rew_no_thres_cross}
\begin{proof}
When $P_B>0$ the social welfare is 0, and the cases $P_J=1$ and $P_I=1$ have already been proven in Sections \ref{sec:SW_green} and \ref{sec:SW_dblue}, respectively. Here, we prove the remaining case where $P_I,P_J<1$ and $P_B=0$, that in cyan in Figure \ref{fig:all}.

Write $P_J^*$ for the equilibrium probability of blindly joining the queue in this case. The social welfare at equilibrium is $U_I(1-P_J^*,P_J^*)=U_J(1-P_J^*,P_J^*)$. To complete the proof, we will show that $U_I(1-P_J^*,P_J^*)$ is increasing as a function of $R$.

We begin by deriving a closed form expression for $U_I(1-P_J^*,P_J^*)$. First, recall that $\rho_U\equiv P_J \rho$, and note that when $P_B=0$ we have $\rho_L=\rho$. Now consider \eqref{eqn:VI_form_gen} under these equalities. The bracketed terms can be written as $R(1-\rho^{n_e})-b$, where 
$$b \equiv \frac{c_w(1+n_e\rho^{n_e+1}-(n_e+1)\rho^{n_e})}{\mu(1-\rho)}>0$$
does not depend on $R$. Where $\rho_\Delta \equiv \rho_L-\rho_U$, we also have
$$\frac{(1-P_J \rho)}{(1-P_J \rho-\rho_\Delta \rho^{n_e})} = \left(1-\frac{\rho_\Delta \rho^{n_e}}{1-P_J \rho}\right)^{-1} = \left(1-\rho^{n_e}\left[1-\frac{1-\rho}{1-P_J \rho} \right]\right)^{-1}.$$
To summarise, we may write
\begin{equation} \label{eqn:V_I_nobalk}
U_I(1-P_J,P_J)=\left(1-\rho^{n_e}\left[1-\frac{1-\rho}{1-P_J \rho} \right]\right)^{-1} \hspace{-10pt} \times \left[R(1-\rho^{n_e})-b\right] - C_I.
\end{equation}

Recall that the equilibrium probability $P_J^*$ satisfies $y^*=1-P_J^*\rho$ where $y^*$ is the one positive root of the quadratic equation in \eqref{eqn:quad_simple_S5}. Using the quadratic formula, we may write
$$1-P_J^*\rho=\frac{(1-\rho)(\sqrt{(R+d)^2+f}-R-d)}{k}$$
where
$$ k=2C_I(\rho^{-n_e}-1)>0, \: \: f=\frac{2 k c_w}{\mu (1-\rho)}>0, \: \: \text{and} \: \: d=C_I-\frac{c_w n_e}{\mu}$$
all do not depend on $R$. Substitution into \eqref{eqn:V_I_nobalk} yields
\begin{equation} \label{eqn:V_I_nobalk_at_equil}
U_I(1-P_J^*,P_J^*)=\left(1-\rho^{n_e}\left[1-\frac{k}{\sqrt{(R+d)^2+f}-R-d} \right]\right)^{-1}  \hspace{-10pt} \times \left[R(1-\rho^{n_e})-b\right] - C_I. 
\end{equation}
To complete the proof, we will show that the derivative of \eqref{eqn:V_I_nobalk_at_equil} with respect to $R$ is strictly positive.

Write $g(R)=\sqrt{(R+d)^2+f}$. Then we have $g'(R)=(R+d)/g(R)$, and hence
$$\frac{d}{dR}(g(R)-R-d)^{-1}=\frac{1-g'(R)}{(g(R)-R-d)^2}=\frac{1}{g(R)(g(R)-R-d)}.$$
We may now use the quotient rule to determine that the derivative of $U_I(1-P_J^*,P_J^*)$ is equal to the following divided by a squared (and hence strictly positive) term:
\begin{align}
&(1-\rho^{n_e})\left(1-\rho^{n_e}\left[1-\frac{k}{g(R)-R-d} \right]\right)-\frac{k\rho^{n_e}(R(1-\rho^{n_e})-b)}{g(R)(g(R)-R-d)} \nonumber \\
=&(1-\rho^{n_e})^2+k\rho^{n_e} \left[\frac{(g(R)-R)(1-\rho^{n_e})+b}{g(R)(g(R)-R-d)} \right] \nonumber \\
=&\frac{k\rho^{n_e}(1-\rho^{n_e})}{g(R)(g(R)-R-d)}\left[\frac{g(R)(g(R)-R-d)}{2C_I}+g(R)-R+\frac{b}{1-\rho^{n_e}}\right].
\label{eqn:proof_deriv}
\end{align}
The proof will be completed by showing that \eqref{eqn:proof_deriv} is strictly positive. Recall that $k>0$, and note that $g(R)>R+d>0$; it follows that the term multiplying the squared brackets in \eqref{eqn:proof_deriv} is strictly positive. Since $b>0$ and $g(R)-R>d$, the bracketed term is strictly greater than
\begin{equation*}
\frac{g(R)(g(R)-R-d)}{2C_I}+d>\frac{\widehat{g}(R)(\widehat{g}(R)-R-d)}{2C_I}+d,
\end{equation*}
where 
$$\widehat{g}(R) \equiv \sqrt{(R+d)^2+4C_Ic_w/\mu} < \sqrt{(R+d)^2+f} \equiv g(R).$$
Therefore, to show that \eqref{eqn:proof_deriv} is strictly positive, it will suffice to show that $\widehat{g}(R)(\widehat{g}(R)-R-d)+2dC_I>0$. We have
\begin{equation} \label{eqn:proof_final_exp}
\widehat{g}(R)(\widehat{g}(R)-R-d)+2dC_I=(R+d)^2+\frac{2C_Ic_wn_e}{\mu}-(R+d)\sqrt{(R+d)^2+4C_Ic_wn_e/\mu} +2C_I^2.
\end{equation}
When $n_e=0$, \eqref{eqn:proof_final_exp} clearly simplifies to $2C_I^2$. The proof is completed by noting that the derivative of \eqref{eqn:proof_final_exp} with respect to $n_e$ is given by
\begin{equation*}
\frac{2C_Ic_w}{\mu}\left(1-\frac{R+d}{\sqrt{(R+d)^2+4C_Ic_wn_e/\mu}}\right)>0.
\end{equation*}
\end{proof}

\subsection{Proof of Lemma \ref{prop:equil_prob_cross_threshold}} \label{append:proof:prop:equil_prob_cross_threshold}
\begin{proof}
Let $x \in \{2,3,\ldots\}$. First suppose that $R=xc_w/\mu$, so the optimal threshold satisfies $n_e=x-1$. By \eqref{eqn:quad_simple_S5}, the equilibrium probability $P_J^*$ satisfies $y^*=1-P_J^*\rho$ where $y^*$ is the one positive root of the quadratic equation
\begin{equation} \label{eqn:feqn_proof}
f(y) \equiv y^2\left(C_I\frac{\rho^{-(x-1)}-1}{1-\rho}\right)+y\left(C_I+\frac{c_w}{\mu}\right)-\frac{c_w}{\mu}=0.
\end{equation}
Now take $\epsilon\in (0,c_w/\mu)$ and set $R=xc_w/\mu+\epsilon$. Now the optimal threshold is $n_e=x$, and, by \eqref{eqn:quad_simple_S5}, the equilibrium probability $P_{J,\epsilon}^*$ now satisfies $y^*_{\epsilon}=1-P_{J,\epsilon}^*\rho$, where $y^*_{\epsilon}$ is the one positive root of
\begin{equation*}
g(y) \equiv y^2\left(C_I\frac{\rho^{-x}-1}{1-\rho}\right)+y\left(C_I+\epsilon\right)-\frac{c_w}{\mu}=0.
\end{equation*}
To complete the proof, we will show that $y^*_{\epsilon}>y^*$ for small-enough $\epsilon$, from which it clearly follows that $P_{J,\epsilon}^*<P_{J}^*$.

Note that, for any $k\in\{1,2,\ldots\}$ we have
$$\frac{\rho^{-k}-1}{1-\rho}=\sum_{i=1}^k \rho^{-i},$$
from which it follows that
\begin{equation} \label{eqn:gf_diff}
g(y)-f(y)=y^2C_I\rho^{-x}-y\left(\frac{c_w}{\mu}-\epsilon\right).
\end{equation}
Therefore, $f$ and $g$ intersect at $y=0$ and at
\begin{equation} \label{eqn:yhat_def}
\widehat{y}\equiv \frac{\rho^x}{C_I}\left(\frac{c_w}{\mu}-\epsilon\right)>0.
\end{equation}
When $|y|$ is very large, the positive quadratic term in \eqref{eqn:gf_diff} dominates, so we must have $g(y)>f(y)$. It follows that we have $g(y) < f(y)$ if and only if $y \in(0,\widehat{y})$. 
Since their quadratic terms are positive and shared constant term negative, $f$ and $g$ have one negative and one positive root, the positive roots denoted by $y^*$ and $y^*_{\epsilon}$, respectively. 
It follows that $f$ must be strictly negative on $[0,y^*)$ and strictly positive on $(y^*, \infty)$, and that $g$ must be strictly negative on $[0,y^*_{\epsilon})$ and strictly positive on $(y^*_{\epsilon}, \infty)$. 

We have $g(\widehat{y})=f(\widehat{y})$ by the definition of $\widehat{y}$. By the deductions in the preceding paragraphs, note that if $g(\widehat{y})=f(\widehat{y})>0$, then we must have $y^*_{\epsilon}, y^* < \widehat{y}$, so $f(y^*)=0 > g(y^*)$, and hence $y^* \in (0,y^*_{\epsilon})$. The proof will therefore be completed by showing that $f(\widehat{y})>0$ for small-enough $\epsilon$.

By the definitions in \eqref{eqn:feqn_proof} and \eqref{eqn:yhat_def}, we have
\begin{equation*}
f(\widehat{y}) = \left(\frac{\rho^x}{C_I}\left(\frac{c_w}{\mu}-\epsilon\right) \right)^2\left(C_I\frac{\rho^{-(x-1)}-1}{1-\rho}\right)+\frac{\rho^x}{C_I}\left(\frac{c_w}{\mu}-\epsilon \right)\left(C_I+\frac{c_w}{\mu}\right)-\frac{c_w}{\mu}.
\end{equation*}
Collecting terms which do not depend on $\epsilon$, we obtain
\begin{align*}
f(\widehat{y})&=\left(\frac{c_w \rho^x}{\mu}\right)^2 \left(\frac{1}{C_I}\right)\left(\frac{\rho^{-(x-1)}-1}{1-\rho}\right)+\left(\frac{c_w \rho^x}{C_I\mu} \right)\left(C_I+\frac{c_w}{\mu}\right) - \frac{c_w}{\mu} + \epsilon(a\epsilon +b) 
\end{align*}
where $a$ and $b$ do not depend on $\epsilon$. Simplifying further, we obtain
\begin{equation} \label{eqn:feqn_athat_proof}
f(\widehat{y})=\frac{c_w(1-\rho^x)}{\mu}\left(\frac{c_w \rho^x}{C_I (\mu-\lambda)} -1 \right)+ \epsilon(a\epsilon +b).
\end{equation}

Since the equilibrium satisfies $P_I>0$, by Table \ref{tab:HR_no_custs}, in scenario 1 we have $C_I < C_{I,0}^1(R)$, and in scenarios 2 and 3 we have $C_I < C_{I,0}^{2,3}(R)$. Recall that $C_{I,0}^1(R)=C_{I,0}^{2,3}(R)$ when $R=c_w/(\mu-\lambda)$. By their properties discussed in Section \ref{sec:no_custs}, we have $C_{I,0}^1(R)$ continuous and strictly decreasing over all scenarios and $C_{I,0}^{2,3}(R)$ continuous and strictly increasing over all scenarios. Therefore, when $R<c_w(\mu-\lambda)$, which covers all of scenarios 2 and 3, we have $C_{I,0}^{2,3}(R)<C_{I,0}^1(R)$. We may hence conclude that in \emph{any} scenario, if the equilibrium has $P_I>0$, then we have $C_I<C_{I,0}^1(R)$.

Using the definition in Table \ref{tab:HR_no_custs} with $n_e=x-1$, we have
$$C_I<C_{I,0}^1(xc_w/\mu)=\frac{\rho^xc_w}{(\mu-\lambda)}.$$
It follows that the left-hand term in \eqref{eqn:feqn_athat_proof} is strictly positive when the equilibrium satisfies $P_I>0$. Therefore, for small-enough $\epsilon>0$, we have $f(\widehat{y})>0$, completing the proof.
\end{proof}

\end{appendices}

\bibliographystyle{plain}
\bibliography{References}

\end{document}